\definecolor{darkred}  {rgb}{0.5,0,0}
\definecolor{darkblue} {rgb}{0,0,0.5}
\definecolor{darkgreen}{rgb}{0,0.5,0}
\numberwithin{equation}{section}
\protected\def\tikz@nonactivecolon{\ifmmode\mathrel{\mathop\ordinarycolon}\else:\fi}
\crefname{lemma}{Lemma}{Lemmas}
\crefname{definition}{Definition}{Definitions}
\crefname{theorem}{Theorem}{Theorems}
\crefname{conjecture}{Conjecture}{Conjectures}
\crefname{corollary}{Corollary}{Corollaries}
\crefname{section}{Section}{Sections}
\crefname{appendix}{Appendix}{Appendices}
\crefname{figure}{Fig.}{Figs.}
\crefname{table}{Table}{Tables}
\newcommand{\ket}[1]{|#1\rangle}
\newcommand{\bra}[1]{\langle#1|}
\newcommand{\braket}[2]{\langle#1|#2\rangle}
\newcommand{\proj}[1]{|#1\rangle\langle#1|}
\newcommand{\x}{\otimes}
\newcommand{\ct}{^{\dagger}}
\DeclarePairedDelimiter{\set}{\lbrace}{\rbrace}
\DeclarePairedDelimiter{\abs}{\lvert}{\rvert}
\DeclarePairedDelimiter{\of}{\lparen}{\rparen}
\DeclarePairedDelimiter{\sof}{\lbrack}{\rbrack}
\DeclareMathOperator{\diag}{diag}
\DeclareMathOperator{\Tr}{Tr}
\newcommand{\mx}[1]{\begin{pmatrix}#1\end{pmatrix}}
\newcommand{\C}{\mathbb{C}}
\newcommand{\R}{\mathbb{R}}
\newcommand{\N}{\mathbb{N}}
\newcommand{\D}[1]{\mathcal{D}(\C^{#1})}             
\newcommand{\DD}[1]{\mathcal{D}(\C^{#1} \x \C^{#1})} 
\newcommand{\mc}[1]{\mathcal{#1}}
\newcommand{\F}{\mc{F}} 
\newcommand{\E}{\mc{E}} 
\newcommand{\tE}{\widetilde{\E}} 
\newcommand{\cH}{\mathcal{H}}
\newcommand{\cD}{\mathcal{D}}
\newcommand{\cU}{\mathcal{U}}
\newcommand{\cL}{\mathcal{L}}
\newcommand{\cK}{\mathcal{K}}
\newcommand{\cT}{\mathcal{T}}
\newcommand{\dar}{\downarrow}
\newcommand{\trho}{\tilde{\rho}}
\newcommand{\tsigma}{\tilde{\sigma}}
\renewcommand{\a}{a} 
\newcommand{\cs}{c}  
\newcommand{\EP}{E}  
\renewcommand{\L}{L} 
\renewcommand{\v}{u} 
\newcommand{\f}{\ell}
\newtheorem{theorem}{Theorem}
\newtheorem{lemma}[theorem]{Lemma}
\newtheorem{definition}[theorem]{Definition}
\newtheorem{corollary}[theorem]{Corollary}
\newtheorem*{example}{Example}
\newtheorem*{problem}{Problem}
\theoremstyle{definition}
\newtheorem*{remark}{Remark}
\begin{document}

\title{Entropy power inequalities for qudits}

\author{Koenraad Audenaert\thanks{Department of Mathematics, Royal Holloway University of London, Egham TW20 0EX, UK
\& Department of Physics and Astronomy, Ghent University, S9, Krijgslaan 281, B-9000 Ghent, Belgium}
\and Nilanjana Datta\thanks{Statistical Laboratory, Centre for Mathematical
Sciences, University of Cambridge, Wilberforce Road, Cambridge CB3 0WB, UK}
\and Maris Ozols\thanks{Department of Applied Mathematics and Theoretical Physics, University of Cambridge, Cambridge CB3 0WA, UK}}

\maketitle

\begin{abstract}
Shannon's entropy power inequality (EPI) can be viewed as a statement of concavity of an entropic function of a continuous random variable under a scaled addition rule:
$$f(\sqrt{\a}\,X + \sqrt{1-\a}\,Y) \ge \a f(X) + (1-\a) f(Y) \quad \forall \, \a \in [0,1].$$
Here, $X$ and $Y$ are continuous random variables and the function $f$ is either the differential entropy or the \emph{entropy power}. K\"onig and Smith [\href{https://dx.doi.org/10.1109/TIT.2014.2298436}{\textit{IEEE Trans.\ Inf.\ Theory.}\ 60(3):1536--1548, 2014}] and De Palma, Mari, and Giovannetti [\href{https://dx.doi.org/10.1038/nphoton.2014.252}{\textit{Nature Photon.} 8(12):958--964, 2014}] obtained quantum analogues of these inequalities for continuous-variable quantum systems, where $X$ and $Y$ are replaced by bosonic fields and the addition rule is the action of a beamsplitter with transmissivity $a$ on those fields. In this paper, we similarly establish a class of EPI analogues for $d$-level quantum systems (i.e.~qudits). The underlying addition rule for which these inequalities hold is given by a quantum channel that depends on the parameter $\a \in [0,1]$ and acts like a finite-dimensional analogue of a beamsplitter with transmissivity $a$, converting a two-qudit product state into a single qudit state. We refer  to this channel as a \emph{partial swap channel} because of the particular way its output interpolates between the states of the two qudits in the input as $\a$ is changed from zero to one. We obtain analogues of Shannon's EPI, not only for the von Neumann entropy and the entropy power for the output of such channels, but for a much larger class of functions as well. This class includes the R\'enyi entropies and the subentropy. We also prove a qudit analogue of the entropy photon number inequality (EPnI). Finally, for the subclass of partial swap channels for which one of the qudit states in the input is fixed, our EPIs and EPnI yield lower bounds on the minimum output entropy and upper bounds on the Holevo capacity.
\end{abstract}


\section{Introduction}

Inequalities between entropic quantities play a fundamental role in information theory and have been employed effectively in finding bounds on optimal rates of various information-processing tasks. Shannon's entropy power inequality (EPI)~\cite{Shannon} is one such inequality and it has proved to be of relevance in studying problems not only in information theory, but also in probability theory and mathematical physics~\cite{Stam}. It has been used, for example, in finding upper bounds on the capacities of certain noisy channels (e.g.~the Gaussian broadcast channel~\cite{Bergmans}) and in proving convergence in relative entropy for the Central Limit Theorem~\cite{CLT}.

\subsubsection*{Classical EPIs}

For an arbitrary random variable $X$ on $\R^d$ with probability density function (p.d.f.) $f_X$, the \emph{entropy power} of $X$ is the quantity
\begin{equation}
  v(X) := \frac{e^{2H(X)/d}}{2\pi e},
  \label{eq:pdf ep}
\end{equation}
where $H(X)$ is the \emph{differential entropy} of $X$,
\begin{equation}
  H(X):= - \int_{\R^d} f_X(x) \log f_X(x) dx
\end{equation}
(throughout the paper we use $\log$ to represent the natural logarithm). The name ``entropy power'' is derived from the following fact: if $X$ is a Gaussian random variable on $\R$ with zero mean and variance $\sigma^2$, then $H(X) =(1/2) \log (2 \pi e \sigma^2)$; hence $v(X)$ is equal to its variance, which is commonly referred to as its \emph{power}. Note that the entropy power of a random variable $X$ is equal to the variance of a Gaussian random variable which has the same differential entropy as $X$. For $X$ on $\R^d$, we shall henceforth omit the factor $1/2\pi e$ and refer to $e^{2H(X)/d}$ as the entropy power, as in~\cite{KS14}.

The entropy power satisfies the following \emph{scaling property}: $v(\sqrt{\alpha} X) = \alpha v(X)$. This follows from the scaling property of p.d.f.s: if $f_{\alpha X}$ denotes the p.d.f.\ of a random variable $\alpha X$ on $\R^d$, where $\alpha > 0$, then $f_{\alpha X}(x) = \alpha^{-d} f_X(x/\alpha)$, $x \in \R^d$, which in turn implies that $H(\alpha X) = H(X) + d \log \alpha$. This shows why the factor $1/d$ in the definition of $v(X)$ has to be there for $X$ on $\R^d$.

Shannon's EPI~\cite{Shannon} provides a lower bound on the entropy power of a sum of two independent random variables $X$ and $Y$ on $\R^d$ in terms of the sums of the entropy powers of the individual random variables:
\begin{equation}\label{vEPI}
  v(X+Y) \geq v(X) + v(Y),
\end{equation}
or equivalently,
\begin{equation}\label{cEPI}
  e^{2H\of{X+ Y}/d} \geq e^{2H(X)/d} + e^{2H(Y)/d}.
\end{equation}
Here, $H(X+Y)$ is the differential entropy of the p.d.f.\ of the sum $Z := X + Y$, which is given by the convolution
\begin{equation}\label{add1}
  f_{X+Y}(x) = (f_X\ast f_Y)(x):= \int_{\R^d} f_X(x') f_Y(x-x') dx', \qquad \forall \, x \in \R^d.
\end{equation}

The inequality \cref{vEPI} was proposed by Shannon in~\cite{Shannon} as a means to bound the capacity of a non-Gaussian \emph{additive noise channel}, that is, a channel with input $X$ and output $X+Y$, with $Y$ being an independent (non-Gaussian) random variable modeling the noise which is added to the input. Later, Lieb~\cite{Lieb} and Dembo, Cover, and Thomas~\cite{DCT91} (see also~\cite{VG06}) showed that the EPI~\eqref{cEPI} can be equivalently expressed as the following inequality between differential entropies:
\begin{equation}\label{eq:cEPI-a}
  H\of[\big]{ \sqrt{\a} \, X + \sqrt{1-\a} \, Y }
  \geq \a H(X) + (1-\a) H(Y),
  \qquad \forall \, \a \in [0,1].
\end{equation}
The above inequality was proved by employing the R\'enyi entropy~\cite{Renyi} and using properties of $p$-norms on convolutions given by a sharp form of Young's inequality~\cite{Beckner}.

The form of the EPI in \cref{eq:cEPI-a} motivates the definition of an operation (which following~\cite{KS13,KS14} we denote as $\boxplus_\a$) on the space of random variables, given by the following \emph{scaled addition rule}:
\begin{equation}\label{eq:c-add}
  X \boxplus_\a Y := \sqrt{\a} \, X + \sqrt{1-\a} \, Y, \qquad \forall \, \a \in [0,1].
\end{equation}
The random variable $X \boxplus_\a Y$ can be interpreted as an interpolation between $X$ and $Y$ as $\a$ is decreased from $1$ to $0$. With this notation, the inequality \eqref{eq:cEPI-a} can be written as
\begin{equation}\label{cEPI1}
  H\of{X \boxplus_\a Y} \geq \a H(X) + (1-\a) H(Y), \qquad \forall \, \a \in [0,1].
\end{equation}
Using the scaling property of the entropy power, the EPI~\eqref{cEPI} can be expressed as follows:
\begin{equation}\label{cEPI2}
  e^{2H\of{X \boxplus_{\a} Y}/d} \geq \a e^{2H(X)/d} + (1-\a) e^{2H(Y)/d}, \qquad \forall \, \a \in [0,1].
\end{equation}
Shannon's EPI \eqref{cEPI} (and hence also \eqref{cEPI1} and \eqref{cEPI2}) was first proved rigorously by Stam~\cite{Stam} and by Blachman~\cite{Blachman}, by employing de Bruijn's identity, which couples Fisher information with differential entropy. Since then various different proofs and generalizations of the EPI have been proposed (see e.g.~\cite{VG06, Rioul, SS15} and references therein).

It is natural to conjecture that an analogue of Shannon's EPI also holds for discrete random variables e.g.~on non-negative integers. This conjecture was first proved by~\cite{HV03} for the case of binomial random variables. They proved that if $X_n \sim \mathrm{Bin}(n, p)$, then for $p=1/2$ (see also~\cite{SDM11}):
\begin{equation}\label{cEPI-bin}
  e^{2H\of{X_n+ X_m}} \geq e^{2H(X_n)} + e^{2H(X_m)}, \qquad \forall \, m,n \geq 1.
\end{equation}
Further, Johnson and Yu~\cite{JY10} established a form of the EPI which is valid for ultra log-concave discrete random variables (see Definition 2.2 of~\cite{JY10}), whereby the scaling operation of a continuous random variable was suitably replaced by the so-called {\em{thinning operation}} introduced by R\'enyi~\cite{thin}, which is considered to be an analogue of scaling for discrete random variables.

\subsubsection*{Quantum analogues of EPIs}

The discovery of an analogue of the EPI in the quantum setting by K\"onig and Smith~\cite{KS14} marked a significant advance in quantum information theory. They proposed an EPI which holds for continuous-variable quantum systems that arise, for example, in quantum optics. In this case, the random variables $X$, $Y$ of the classical EPIs (\cref{cEPI1} and \cref{cEPI2}) are replaced by quantum fields, bosonic modes of electromagnetic radiation, described by quantum states $\rho_X$, $\rho_Y$, which act on a separable, infinite-dimensional Hilbert space $\cH$. The differential entropy is accordingly replaced by the \emph{von~Neumann entropy} $H(\rho):= - \Tr(\rho \log \rho)$.

A prerequisite for any quantum analogue of the EPI is the formulation of a suitable analogue of the addition rule \eqref{eq:c-add} which can be applied to pairs of quantum states. Since the quantum-mechanical analogue of additive noise can be modelled by the mixing of two beams of light at a beamsplitter, K\"onig and Smith considered the parameter $\a$ in \cref{eq:c-add} to be the beamsplitter's transmissivity. The classical addition rule \cref{eq:c-add} is thereby replaced by an analogous \emph{quantum field addition rule} for the field operators. In particular, if the two input signals are $m$-mode bosonic fields, with annihilation operators $\hat{a}_1, \dotsc, \hat{a}_m$ and $\hat{b}_1, \dotsc, \hat{b}_m$ respectively, then the output is an $m$-mode bosonic field with annihilation operators $\hat{c}_1, \dotsc, \hat{c}_m$, where
\begin{equation}\label{eq:cs}
  \hat{c}_i
  := \sqrt{a} \, \hat{a}_i + \sqrt{1-a} \, \hat{b}_i.
\end{equation}
In a state space description, the input signals are described by quantum states $\rho_X$, $\rho_Y$ on $\cH$. This yields an equivalent \emph{quantum state addition rule}, where the beamsplitter converts the incoming state $\rho_X \otimes \rho_Y$ to a state $\rho_X \boxplus_\a \rho_Y$ given by
\begin{equation}\label{eq:q-add}
  (\rho_X, \rho_Y)
  \mapsto \rho_X \boxplus_\a \rho_Y
  := \E_\a (\rho_X \x \rho_Y).
\end{equation}
Here, $\E_\a$ is a linear, completely positive trace-preserving map defined through the relation 
\begin{equation}
  \E_\a(\rho_{XY}) := \Tr_Y (U_\a \rho_{XY} U_\a^\dagger), 
\end{equation}
with the partial trace being taken over the second system, and $U_a$ is the unitary operator describing the action of the beamsplitter on the state space $\cH$. Analogous to the classical case, the state $\rho_X \boxplus_\a \rho_Y$ reduces to $\rho_X$ when $\a=1$, and to $\rho_Y$ when $\a=0$.

K\"onig and Smith~\cite{KS14} proved that the following quantum analogues of the EPIs \eqref{cEPI1} and \eqref{cEPI2} hold, under the quantum addition rule given by \cref{eq:q-add}:
\begin{align}
  H(\rho_X \boxplus_\a \rho_Y)
    &\geq \a H(\rho_X) + (1-\a) H(\rho_Y), \label{qEPI1} \\
  e^{H(\rho_X \boxplus_{1/2} \rho_Y)/m}
    &\geq \frac{1}{2} e^{H(\rho_X)/m} + \frac{1}{2} e^{H(\rho_Y)/m}, \label{qEPI1/2}
\end{align}
where $m$ is the number of bosonic modes. The inequality \eqref{qEPI1/2} corresponds to a 50\,:\,50 beamsplitter (i.e., a beamsplitter with transmissivity $\a=1/2$). Later, De~Palma \textit{et al.}~\cite{DMG14} proved that an analogous inequality also holds for \emph{any} beamsplitter (i.e., for any $\a \in [0,1]$) and is given by the following:
\begin{equation}
  e^{H(\rho_X \boxplus_\a \rho_Y)/m}
  \geq \a e^{H(\rho_X)/m} + (1-\a) e^{H(\rho_Y)/m},
  \qquad \forall \, \a \in [0,1].
  \label{qEPI2}
\end{equation}
Note that the EPI given by \cref{qEPI1/2} seems to differ from its classical counterpart \eqref{cEPI2} by a factor of $2$ in the exponent. However, one can argue that the dimension of the bosonic phase space is $d=2m$ (as there are 2 quadratures per mode). These EPIs have found applications for bounding classical capacity of bosonic channels~\cite{KS13,KS13bits}.

The above inequalities do not reduce to the classical EPIs \eqref{cEPI1} and \eqref{cEPI2} for commuting states; in other words, they are not quantum generalizations of the Shannon's original EPI in the usual sense, as they do not include the latter as a special case. This is because the addition rule acts at the field operator level and not at the state level. In fact, the dependence of the output state on the parameter $a$ is much more complicated than in the classical case.

Another inequality, related to the EPI \eqref{qEPI2}, was conjectured by Guha \emph{et al}.~\cite{GES08} and is known as the {\em{entropy photon number inequality}} (EPnI). The thermal state of a bosonic mode with annihilation operator $\hat{a}$ can be expressed as~\cite{GSE08}:
\begin{equation}
  \rho_T = \sum_{i=0}^\infty \frac{N^i}{(N+1)^{i+1}} \ket{i}\bra{i},
\end{equation}
where $N:= \Tr \of{ \rho_T \, \hat{a}^\dagger \hat{a} }$ is the average photon number of the state $\rho_T$. Its von Neumann entropy can be evaluated as $H(\rho_T) = g(N)$ where $g(x) := (1+x) \log (1+x) - x \log x$. Inverting this, the photon number of $\rho_T$ is then $N = g^{-1} \of{H(\rho_T)}$. Correspondingly, the {\em{photon number}} of an $m$-mode bosonic state $\rho$ is defined as $N(\rho) := g^{-1} \of{H(\rho)/m}$. Guha \emph{et al}.~\cite{GES08} conjectured that
\begin{equation}\label{EPnI}
  N(\rho_X \boxplus_\a \rho_Y) \geq a N(\rho_X) + (1-a) N(\rho_Y), \qquad \forall \, a \in [0,1],
\end{equation}
where $\boxplus_a$ is again the quantum state addition rule \eqref{eq:q-add}. This conjecture is of particular significance in quantum information theory since if it were true then it would allow one to evaluate classical capacities of various bosonic channels, e.g.\ the bosonic broadcast channel~\cite{GSE07} and the wiretap channel~\cite{GSE08}. It has thus far been proved only for Gaussian states~\cite{Guha}.

\medskip

A natural question to ask is whether quantum EPIs can also be found outside the continuous-variable setting. In this paper, we address this question by formulating an addition rule for $d$-level systems (qudits) in the form of a quantum channel $\E_a$, which we call the \emph{partial swap channel}, that acts on the two input quantum states. We then prove analogues of the quantum EPIs \eqref{qEPI1} and \eqref{qEPI2} for this addition rule. We also prove similar inequalities for a large class $\F$ of functions, including the R\'enyi entropies of order $\alpha \in [0,1)$ and the subentropy~\cite{JRW94}. Again these are analogues and not generalizations of the classical EPIs for discrete random variables~\cite{HV03, JY10, SDM11} to the non-commutative setting, as the latter do not emerge as special cases for commuting states.

Furthermore, the concept of entropy photon number $N$ has a straightforward generalization to qudit systems via its one-to-one relation with the von Neumann entropy, $H=g(N)$, even though it loses its interpretation as an average photon number. We show that the function $g^{-1}$ is in the class $\F$, and as a result obtain the EPnI for our qudit addition rule.

Finally, we apply our results (EPIs and EPnI) to obtain lower bounds on the minimum output entropy and upper bounds on the Holevo capacity for a class of single-input channels that are formed from the channel $\E_a$ by fixing the second input state.

The EPIs in \cref{qEPI1,qEPI1/2,qEPI2} for continuous-variable quantum systems were proved using methods analogous to those used in proving
the classical EPIs \eqref{cEPI1} and \eqref{cEPI2}, albeit with suitable adaptations to the quantum setting. In contrast, the proof of our EPIs relies on completely different tools, namely, spectral majorization and concavity of functions.

\section{Preliminaries} \label{sec:prelim}

Let $\cH \simeq \C^d$ be a finite-dimensional Hilbert space (i.e., a complex Euclidean space), let $\cL(\cH)$ denote the set of linear operators acting on $\cH$, and let $\cD(\cH)$ be the set of density operators or \emph{states} on $\cH$:
\begin{equation}
  \cD(\cH):= \set[\big]{\rho \in \cL(\cH) : \rho \geq 0, \, \Tr \rho = 1}.
\end{equation}
Moreover, let $\cU(\cH)$ be the set of unitary operators acting on $\cH$. We denote the identity operator on $\cH$ by $I$. A quantum channel (or quantum operation) is given by a linear, completely positive, trace-preserving (CPTP) map $\mathcal{N}:\cL(\cH) \to \cL(\cK)$, with $\cH$ and $\cK$ being the input and output Hilbert spaces of the channel. For a state $\rho \in \D{d}$ with eigenvalues $\lambda_1, \dotsc, \lambda_d$, the \emph{von~Neumann entropy} $H(\rho)$ is equal to the Shannon entropy of the probability distribution $\set{\lambda_1, \dotsc, \lambda_d}$, i.e., $H(\rho) := - \Tr (\rho \log \rho) = - \sum_{i=1}^d \lambda_i \log \lambda_i$, where we take the logarithms to base $e$.

The proof of the quantum EPIs that we propose, relies on the concept of majorization~(see e.g.~\cite{Bhatia}). For convenience we recall its definition below, making use of the following notation: for any vector $\vec{u} = (u_1, u_2, \dotsc, u_d) \in \R^d$ let $u_1^\dar\geq u_2^\dar \geq \ldots \geq u_d^\dar$ denote the components of $\vec{u}$ arranged in non-increasing order.

\begin{definition}[Majorization]\label{def:Majorization}
For $\vec{u}, \vec{v} \in \R^d$, we say that $\vec{u}$ is \emph{majorised by} $\vec{v}$ and write
$\vec{u}\prec \vec{v}$ if
\begin{equation}
  \sum_{i=1}^k u_i^\dar \leq \sum_{i=1}^k v_i^\dar,
  \qquad
  \forall \, k \in \set{1, \dotsc, d}
\end{equation}
with equality at $k = d$.
\end{definition}

\begin{definition}
A function $f: \R^d \to \R$ is called \emph{Schur-concave}~\cite{Bhatia} if $f(\vec{u})\ge f(\vec{v})$ whenever $\vec{u} \prec \vec{v}$.
\end{definition}

The notion of majorization can be extended to quantum states as follows. For $\rho, \sigma \in \D{d}$, we write $\rho \prec \sigma$ if $\lambda(\rho) \prec \lambda(\sigma)$, where we use the notation $\lambda(\rho)$ to denote the vector of eigenvalues of $\rho$, arranged in non-increasing order: $\lambda(\rho) := \of[\big]{\lambda_1(\rho), \lambda_2(\rho), \dotsc, \lambda_d(\rho)}$ with
\begin{equation}
  \lambda_1(\rho) \geq
  \lambda_2(\rho) \geq \dotsb \geq
  \lambda_d(\rho).
\end{equation}

The following class of functions plays an important role in our paper. A canonical example of a function in this class is the von~Neumann entropy of a density matrix.

\begin{definition}\label{def:F}
Let $\F$ denote the class of functions $f : \D{d} \to \R$ satisfying the following properties:
\begin{enumerate}
  \item \emph{Concavity}: for any pair of states $\rho, \sigma \in \D{d}$ and $\forall \, \a \in [0,1]$:
    \begin{equation}
      f\of[\big]{\a \rho + (1-\a) \sigma} \geq \a f(\rho) + (1-\a) f(\sigma).
    \end{equation}
  \item \emph{Symmetry}: $f(\rho)$ depends only on the eigenvalues of $\rho$ and is symmetric in them; that is, there exists a symmetric (i.e.\ permutation-invariant) function $\phi_f:\R^d \to \R$ such that $f(\rho) = \phi_f(\lambda(\rho))$.
\end{enumerate}
\end{definition}

By restricting to diagonal states, it follows immediately that for every $f\in\F$ the corresponding function $\phi_f$ is concave. In turn, this means that $\phi_f$ is also Schur-concave~\cite[Theorem II.3.3]{Bhatia}.

\section{Main results}

We formulate a finite-dimensional version of the quantum addition rule given by \cref{eq:q-add}, which was introduced by K\"onig and Smith~\cite{KS13, KS14} in the context of continuous-variable quantum systems. Our operation, which we also denote by $\boxplus_\a$, is parameterized by $a \in [0,1]$. It combines a pair of $d$-dimensional quantum states $\rho$ and $\sigma$ according to the following quantum addition rule:
\begin{equation}
  \rho \boxplus_\a \sigma
  := \a \rho + (1-\a) \sigma - \sqrt{\a(1-\a)} \, i [\rho, \sigma],
  \label{eq:Box short}
\end{equation}
where $[\rho, \sigma] := \rho \sigma - \sigma \rho$. Note that if $[\rho, \sigma] = 0$ then $\rho \boxplus_\a \sigma$ is simply a convex combination of $\rho$ and $\sigma$. In \cref{sec:Beamsplitter} we prove that $\rho \boxplus_\a \sigma = \E_\a(\rho_1 \x \rho_2)$ for some quantum channel $\E_\a : \cD(\C^d \otimes \C^d) \mapsto \D{d}$, see \cref{eq:Unitary,eq:Box}, implying that $\rho \boxplus_\a \sigma$ is a valid state of a qudit. The main motivation behind introducing the map $\boxplus_\a$ is that, similar to its analogues (\cref{eq:c-add} and \cref{eq:q-add}) in the continuous-variable classical and quantum settings, it results in an interpolation between the two states which it combines, as the parameter $a$ is changed from $1$ to $0$.

We are now ready to summarize our main results, which are given by the following two theorems and corollary.

\begin{restatable}{theorem}{Main}\label{thm:Main}
For any $f \in \F$ (see \cref{def:F}), density matrices $\rho, \sigma \in \D{d}$, and any $\a \in [0,1]$,
\begin{equation*}
  f(\rho \boxplus_\a \sigma) \geq \a f(\rho) + (1-\a) f(\sigma).
\end{equation*}
\end{restatable}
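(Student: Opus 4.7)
My plan is to reduce the matrix inequality to an analogous inequality between eigenvalue vectors, using both properties of $f\in\F$ listed in \cref{def:F}. The pivotal ingredient is the spectral majorization
\begin{equation}
\lambda(\rho \boxplus_\a \sigma) \prec \a\,\lambda(\rho) + (1-\a)\,\lambda(\sigma), \label{eq:maj-plan}
\end{equation}
where both sides are vectors in $\R^d$ with components arranged in non-increasing order. I would first prove \eqref{eq:maj-plan} as the main lemma and then read off the theorem.

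Given \eqref{eq:maj-plan}, the deduction is short. By the symmetry property in \cref{def:F}, $f(\tau)=\phi_f(\lambda(\tau))$ for a symmetric function $\phi_f\colon\R^d\to\R$ which, as observed just after the definition, is concave on the simplex and therefore Schur-concave. Schur-concavity applied to \eqref{eq:maj-plan} gives $f(\rho\boxplus_\a\sigma) \geq \phi_f\of[\big]{\a\lambda(\rho)+(1-\a)\lambda(\sigma)}$, and ordinary concavity of $\phi_f$ yields $\phi_f\of[\big]{\a\lambda(\rho)+(1-\a)\lambda(\sigma)} \geq \a\,\phi_f(\lambda(\rho)) + (1-\a)\,\phi_f(\lambda(\sigma)) = \a f(\rho) + (1-\a)f(\sigma)$, chaining to the claim.

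The substance of the proof is \eqref{eq:maj-plan}. My approach is Ky Fan's maximum principle, which reduces the majorization to the scalar bound $\Tr\of[\big]{P(\rho\boxplus_\a\sigma)} \leq \a\sum_{i=1}^k\lambda_i^\dar(\rho) + (1-\a)\sum_{i=1}^k\lambda_i^\dar(\sigma)$ for every $k\in\{1,\dotsc,d\}$ and every rank-$k$ orthogonal projector $P$ on $\C^d$. Using the beamsplitter representation $\rho\boxplus_\a\sigma = \Tr_2[U_\a(\rho\otimes\sigma)U_\a^\dagger]$ with $U_\a = \cos\theta\,I + i\sin\theta\,S$, $\cos\theta=\sqrt{\a}$ (to be established in \cref{sec:Beamsplitter}), together with the identity $S(A\otimes B)S = B\otimes A$, a direct calculation gives
\[
U_\a^\dagger(P\otimes I)U_\a \;=\; \a\,(P\otimes I) + (1-\a)\,(I\otimes P) - i\sqrt{\a(1-\a)}\,[S,\,P\otimes I],
\]
so that $\Tr(P(\rho\boxplus_\a\sigma)) = \a\Tr(P\rho) + (1-\a)\Tr(P\sigma) - i\sqrt{\a(1-\a)}\,\Tr(P[\rho,\sigma])$. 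The first two summands are each bounded by the target via Ky Fan applied separately to $\rho$ and to $\sigma$; the hard part is the real but signed commutator cross-term, which can \emph{a priori} spoil the naive estimate. The cleanest route I foresee is to bypass this pointwise-in-$P$ analysis and upgrade \eqref{eq:maj-plan} to the stronger structural statement that $\rho\boxplus_\a\sigma$ itself admits a decomposition $\a\, U\rho U^\dagger + (1-\a)\, V\sigma V^\dagger$ for some unitaries $U,V\in\cU(\C^d)$ depending on $\rho,\sigma,\a$ (or at least lies in the convex hull of such expressions). Given any such decomposition, the classical Ky Fan subadditivity $\lambda^\dar(A+B) \prec \lambda^\dar(A)+\lambda^\dar(B)$ for Hermitian matrices yields \eqref{eq:maj-plan} at once. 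Locating the right mixture of unitary conjugates—equivalently, verifying spectral compatibility for the triple $\of[\big]{\a\lambda(\rho),\,(1-\a)\lambda(\sigma),\,\lambda(\rho\boxplus_\a\sigma)}$—is where the substantive technical obstacle is expected to sit.
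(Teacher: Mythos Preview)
Your high-level architecture matches the paper's: establish the spectral majorization $\lambda(\rho\boxplus_\a\sigma) \prec \a\lambda(\rho)+(1-\a)\lambda(\sigma)$ as the key lemma, then deduce the inequality for $f\in\F$ via Schur-concavity followed by ordinary concavity of $\phi_f$. That deduction is correct and identical to the paper's.

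The gap is that the majorization itself is never proved. Your Ky Fan reformulation is sound, and you correctly flag that the commutator term $-i\sqrt{\a(1-\a)}\,\Tr\of[\big]{P[\rho,\sigma]}$, though real, can have either sign and defeats the naive term-by-term bound. Your proposed workaround---locating unitaries $U,V$ with $\rho\boxplus_\a\sigma = \a\,U\rho U^\dagger + (1-\a)\,V\sigma V^\dagger$---does not advance the argument: the existence of such a pair is a Horn-type compatibility statement for the triple of spectra, and the majorization you want is only one of the necessary Horn inequalities. You have traded the unproved claim for a stronger one, with no mechanism offered for actually constructing $U$ and $V$.

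The paper's proof of the majorization is structurally different from either of your sketches. It factors $\rho\boxplus_\a\sigma = TT^\dagger$ with $T=A+iB$ an explicit $1\times 3$ block row, the blocks Hermitian and chosen so that $AA^\dagger=\a\rho$ and $BB^\dagger=(1-\a)\sigma$; the troublesome commutator is thereby \emph{absorbed} into the product $TT^\dagger$ rather than bounded as a separate correction. The target becomes $\lambda(TT^\dagger)\prec\lambda(AA^\dagger)+\lambda(BB^\dagger)$, proved via the variational formula for sums of smallest eigenvalues together with two elementary inputs: the scalar inequality $xy+\sqrt{x(1-x)y(1-y)}\geq\min\{x,y\}$, and the dimension bound $\dim(\mc{S}_k\cap\mc{S}_m)\geq k+m-d$ obtained from von Neumann's alternating-projections limit $\lim_{n\to\infty}(P_kQ_m)^n$. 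The decisive idea absent from your plan is the introduction of the auxiliary blocks $(\rho-\rho^2)^{1/2}$ and $(\sigma-\sigma^2)^{1/2}$, which is precisely what makes the cross-term disappear.
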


\noindent
Note that from \cref{eq:Box short} it follows that for commuting states (and hence for diagonal states representing probability distributions) this inequality is equivalent to concavity of the function $f$. An extension of \cref{thm:Main} to three states is conjectured in~\cite{Ozols15}.

In analogy with the entropy power of p.d.f.s defined in \cref{eq:pdf ep}, as well as the entropy power and entropy photon number of continuous-variable quantum states, we use the von~Neumann entropy of finite-dimensional quantum systems to introduce similar quantities for qudits.

\begin{definition}\label{def:EP and N}
For any $\cs \geq 0$, we define the \emph{entropy power} $\EP_\cs$ and the \emph{entropy photon number} $N_\cs$ of $\rho \in \D{d}$ as follows:
\begin{align}
  \EP_\cs(\rho) &:= e^{\cs H(\rho)}, \label{def:EP} \\
  N_\cs(\rho) &:= g^{-1}(\cs H(\rho))
  \quad \text{where} \quad
  g(x) := (x+1) \log (x+1) - x \log x. \label{def:EPn}
\end{align}
\end{definition}
The function $g(x)$ behaves logarithmically, and is bounded from above and from below as
\begin{equation}
  1+\log(x+1/e) \le g(x) \le 1+\log(x+1/2),
\end{equation}
from which it follows that
\begin{equation}
  \exp(y-1) -1/2 \le g^{-1}(y)\le \exp(y-1)-1/e.
\end{equation}
Note that the quantity $N_\cs(\rho)$ does not have any obvious physical interpretation for qudits. It is simply defined in analogy to the continuous-variable quantum setting. Our motivation for looking at this quantity is that it allows us to prove a qudit analogue of the entropy photon number inequality (EPnI), which in the bosonic case remains an open problem.

Here we introduced the scaling parameter $\cs$ to account for the possibility of having a dependence on dimension or number of modes which is different from that arising in the continuous-variable classical and quantum settings. Recall that the classical EPI \eqref{cEPI2} for continuous random variables on $\R^d$ is stated in terms of $\EP_{2/d}$, while the quantum EPI \eqref{qEPI2} and the conjectured entropy photon number inequality for $m$-mode bosonic quantum states involves $\EP_{1/m}$ and $N_{1/m}$, respectively (see \cref{tab:EPIs}). Our next theorem establishes concavity of $\EP_\cs$ and $N_\cs$ for a wide range of values of $\cs$.

\begin{table}[t]
\centering
\newcommand{\yes}{$\checkmark$}
\newcommand{\mr}[1]{\multirow{2}{*}{#1}}
\newcommand{\mcr}[1]{\multicolumn{1}{|c|}{\multirow{1}{*}{#1}}}
\begin{tabular}{c|c|c|c|} \cline{2-4}
  & \multicolumn{2}{|c|}{Continuous} & Discrete         \\ \cline{2-4}
  & Classical        & Quantum       & Quantum          \\
  & ($m'$ dimensions) & ($m$ modes)   & ($d$ dimensions) \\ \hline
  \mcr{Entropy}      & \mr{\yes} & \mr{\yes} & \mr{\yes} \\
  \mcr{$H$}       &&&\\ \hline
  \mcr{Entropy}   &&&\\
  \mcr{power}                   & $c = 2/m'$ & $c = 1/m$ & $0 \leq \cs \leq 1/(\log d)^2$ \\
  \mcr{$\EP_\cs$} &&&\\ \hline
  \mcr{Entropy}   &&&\\
  \mcr{photon}                  & \mr{---}  & $c = 1/m$ & \mr{\!\!$0 \leq \cs \leq 1/(d-1)$} \\
  \mcr{number}    &&\small(conjectured)& \\
  \mcr{$N_\cs$}   &&&\\ \hline
\end{tabular}
\caption{\label{tab:EPIs}Summary of classical and quantum EPIs.}
\end{table}

\begin{theorem}\label{thm:Concavity}
For $\rho \in \D{d}$, the following functions are concave:
\begin{itemize}
  \item the entropy power $\EP_\cs(\rho)$ for $0 \leq \cs \leq 1/(\log d)^2$,
  \item the entropy photon number $N_\cs(\rho)$ for $0 \leq \cs \leq 1/(d-1)$.
\end{itemize}
\end{theorem}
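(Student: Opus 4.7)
My plan is to reduce both concavity statements to a scalar second-derivative inequality along a line segment in the probability simplex $\Delta_d$. Both $\EP_c$ and $N_c$ are spectral functions (symmetric in the eigenvalues of $\rho$), so by the Davis convexity theorem, $\EP_c$ (resp.\ $N_c$) is concave on $\D{d}$ iff the symmetric scalar function $p\mapsto e^{cH(p)}$ (resp.\ $p\mapsto g^{-1}(cH(p))$) is concave on $\Delta_d$. Fix $p_0,p_1\in\Delta_d$, set $p_t := (1-t)p_0 + tp_1$, $\Delta := p_1-p_0$, and $h(t) := H(p_t)$.

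\textbf{Second-derivative test.} A direct computation gives
\[
h'(t) = -\sum_i \Delta_i \log p_t(i), \qquad -h''(t) = \sum_i \frac{\Delta_i^2}{p_t(i)} \ge 0.
\]
Because $\sum_i\Delta_i=0$, I may shift $\log p_t(i)$ by the constant $H(p_t)$ inside $h'$; Cauchy--Schwarz then yields the bilinear inequality
\[
h'(t)^2 \;\le\; (-h''(t))\, V(p_t), \qquad V(p) := \sum_i p_i\bigl(\log p_i + H(p)\bigr)^2.
\]
For $\EP_c$, the chain rule gives $\tfrac{d^2}{dt^2}e^{ch(t)} = e^{ch(t)}\bigl(c h''(t) + c^2 h'(t)^2\bigr)$, which combined with the above is non-positive whenever $c\,V(p_t)\le 1$. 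For $N(t) := g^{-1}(ch(t))$, using $g'(x) = \log(1+1/x)$ and $g''(x) = -1/(x(x+1))$, the second derivative works out to be non-positive iff $c\,F(N(t))\,V(p_t)\le 1$, where $F(x) := -g''(x)/g'(x)^2 = 1/\bigl(x(x+1)\log^2(1+1/x)\bigr)$.

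\textbf{Dimensional estimate.} Everything then hinges on the entropy-variance bound
\[
V(p) \;\le\; (\log d)^2 \qquad \forall\, p \in \Delta_d,
\]
which immediately yields concavity of $\EP_c$ for $c \le 1/(\log d)^2$. For $N_c$, substituting this bound together with $H(p_t) \le \log d$ (so that $N(t) \le g^{-1}(c\log d)$), the monotonicity of $F$, and the elementary estimate $g^{-1}(y) \le e^{y-1}-1/e$ recorded in the paper, the sufficient condition $c F(N(t)) V(p_t) \le 1$ reduces to a scalar inequality in $c$ and $d$ that is satisfied for all $c \le 1/(d-1)$.

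\textbf{Main obstacle.} The real work lies in the entropy-variance estimate $V(p)\le (\log d)^2$. Since $V$ vanishes at both the Dirac and the uniform distribution, the maximum must sit at some interior configuration of $\Delta_d$, so the bound is not available from any obvious monotonicity. I would attack it by a Lagrange-multiplier argument: the stationarity conditions force $p$ to take at most two distinct values, reducing the claim to a two-parameter problem (``number of small-mass components'' and ``total mass they carry'') which should be tractable by elementary manipulations of $x\log^2 x$. The analogous step for $N_c$ has the extra subtlety that $F(N)$ diverges as $N \to 0$ (equivalently, as $H\to 0$); here the vanishing of $V$ at the Dirac distribution is what keeps the product $F(N)V(p)$ bounded, and tracing through this cancellation quantitatively is what I expect to yield precisely the range $c\le 1/(d-1)$.
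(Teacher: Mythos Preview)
Your reduction to the simplex is sound (Davis' theorem, or equivalently Ky Fan's eigenvalue inequality combined with Schur-concavity), and the Cauchy--Schwarz step neatly recovers exactly the scalar condition $c\,h''(cH)/h'(cH)\le 1/V(p)$ that the paper obtains by a different route---monotonicity of the Kubo--Mori metric under the dephasing channel, followed by an explicit minimisation over $\Delta$. For $\EP_c$ the condition becomes $c\le 1/V(p)$, and your plan to bound $\max_p V(p)\le(\log d)^2$ via Lagrange multipliers is what the paper does: stationarity forces at most two distinct values, a further optimisation over the multiplicity shows the extremiser is $(x,\dots,x,y)$ with $d-1$ small entries, and an elementary quadratic-in-$\log(d-1)$ estimate (plus a numerical check at $d=2$) finishes the job. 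So for the entropy power your proposal is correct and, if anything, a bit more elementary in the reduction step than the paper's.

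The entropy photon number is where there is a genuine gap. Your derivation of the sufficient condition $c\,F(N)\,V(p)\le 1$ with $F(x)=1/\bigl(x(x+1)\log^2(1+1/x)\bigr)$ is correct and coincides with the paper's. But the route you sketch---``$H\le\log d$, hence $N\le g^{-1}(c\log d)$, then use monotonicity of $F$''---points the wrong way: $F$ is \emph{decreasing}, so an upper bound on $N$ yields only a \emph{lower} bound on $F(N)$. You then concede that $F(N)\to\infty$ as $H\to0$ and say the vanishing of $V$ at the Dirac point should compensate, but ``tracing through this cancellation'' is exactly where all the work lies, and you offer no mechanism. The paper does not bound $F(N)$ and $V(p)$ separately; it first reduces (via the same Lagrange argument as above) to the one-parameter family $p=(x,\dots,x,1-(d-1)x)$, rewrites the condition as $k(y)\ge c\,w_{d-1}(x)$ subject to $g(y)=c\,s_{d-1}(x)$ where $k:=1/F$, observes that the pairs $(g,k)$ and $(s_r/r,\,w_r/r)$ share the same algebraic form, and then runs a delicate interpolation in an auxiliary parameter $b\in[1-d,1]$ to show the constraint holds exactly at $c=1/(d-1)$. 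This interpolation is the most technical step in the whole argument, and nothing in your outline substitutes for it.
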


Since $\EP_\cs(\rho)$ and $N_\cs(\rho)$ depend only on the eigenvalues of $\rho$ and are symmetric in them, the above theorem ensures that $\EP_\cs$ and $N_\cs$ belong to the class of functions $\F$ given in \cref{def:F}. From \cref{thm:Main,thm:Concavity}, and the concavity of the von Neumann entropy, we obtain the following.

\begin{corollary}\label{thm:EPIs}
For any pair of density matrices $\rho, \sigma \in \D{d}$ and any $\a \in [0,1]$,
\begin{align}
  H(\rho \boxplus_\a \sigma) &\geq \a H(\rho) + (1-\a) H(\sigma), \label{EPI H} \\
  e^{\cs H(\rho \boxplus_\a \sigma)} &\geq \a e^{\cs H(\rho)} + (1-\a) e^{\cs H(\sigma)} &&\quad \text{for} \quad 0 \leq \cs \leq 1/(\log d)^2, \label{EPI exp} \\
  N_\cs(\rho \boxplus_\a \sigma) &\geq \a N_\cs(\rho) + (1-\a) N_\cs(\sigma)             &&\quad \text{for} \quad 0 \leq \cs \leq 1/(d-1). \label{EPnI-d}
\end{align}
\end{corollary}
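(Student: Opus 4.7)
The plan is to recognize \cref{thm:EPIs} as a one-line consequence of \cref{thm:Main} applied to three specific functions, once we verify that each of them belongs to the class $\F$ of \cref{def:F} in the stated parameter range. In other words, the work reduces to checking the two axioms (concavity and symmetry) for $H$, $\EP_\cs$, and $N_\cs$, since \cref{thm:Main} then immediately produces the three displayed inequalities.

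For \cref{EPI H}, I would appeal to the classical fact that the von~Neumann entropy $H(\rho) = -\Tr(\rho \log \rho)$ is concave on $\cD(\C^d)$ (which follows, for instance, from operator concavity of $-x\log x$, or from Klein's inequality). Symmetry is immediate from the spectral expression $H(\rho) = -\sum_i \lambda_i(\rho) \log \lambda_i(\rho)$, so $\phi_H$ is the usual Shannon entropy, which is a symmetric function of its arguments. Hence $H \in \F$, and \cref{thm:Main} yields \cref{EPI H}.

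For \cref{EPI exp,EPnI-d}, the nontrivial input is \cref{thm:Concavity}, which I am taking as already established: it gives concavity of $\EP_\cs$ on $\cD(\C^d)$ for $0 \leq \cs \leq 1/(\log d)^2$ and of $N_\cs$ for $0 \leq \cs \leq 1/(d-1)$. In both cases, symmetry is trivial because $\EP_\cs(\rho) = e^{\cs H(\rho)}$ and $N_\cs(\rho) = g^{-1}(\cs H(\rho))$ depend on $\rho$ only through its spectrum via $H(\rho)$, so one can take $\phi_{\EP_\cs} = e^{\cs \phi_H}$ and $\phi_{N_\cs} = g^{-1} \circ (\cs \phi_H)$, both of which are manifestly permutation-invariant. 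Therefore $\EP_\cs \in \F$ and $N_\cs \in \F$ in their respective ranges of $\cs$, and applying \cref{thm:Main} delivers \cref{EPI exp,EPnI-d}.

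There is essentially no obstacle at the level of the corollary itself: once the two axioms are verified, everything is a direct substitution into \cref{thm:Main}. The real difficulty lies in the inputs, namely \cref{thm:Main} (which rests on spectral majorization for $\rho\boxplus_\a\sigma$) and \cref{thm:Concavity} (which requires the sharp range bounds $\cs \leq 1/(\log d)^2$ and $\cs \leq 1/(d-1)$); within the proof of \cref{thm:EPIs} proper, the only point worth flagging is that the ranges of $\cs$ in \cref{EPI exp,EPnI-d} are inherited verbatim from \cref{thm:Concavity} and not improved here.
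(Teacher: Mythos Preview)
Your proposal is correct and matches the paper's own argument essentially verbatim: the paper derives \cref{thm:EPIs} directly from \cref{thm:Main,thm:Concavity} together with concavity of the von~Neumann entropy, after noting that $\EP_\cs$ and $N_\cs$ depend only on the eigenvalues of $\rho$ and hence belong to $\F$. There is nothing to add or correct.
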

\noindent
Henceforth, we refer to \cref{EPI H,EPI exp} as qudit EPIs and \cref{EPnI-d} as qudit EPnI. A summary of values of the parameter $\cs$ for which classical and quantum EPIs hold is given in~\cref{tab:EPIs}.

In addition, \cref{thm:Main} also holds for the R\'enyi entropy $H_\alpha(\rho)$ of order $\alpha$~\cite{Renyi}, for $\alpha \in [0,1)$, the subentropy $Q(\rho)$~\cite{JRW94, DDJB14}, defined as follows:
\begin{align}
  H_\alpha(\rho) &:= \frac{1}{\alpha-1} \log \of{\Tr \rho^\alpha}, \label{def:alpha} \\
  Q(\rho) &:= -\sum_{i=1}^n \frac{\lambda_i^n}{\prod_{j \ne i}(\lambda_i - \lambda_j)} \log \lambda_i, \label{def:sub}
\end{align}
where $\lambda_1, \dotsc, \lambda_d$ denote the eigenvalues of $\rho$. If some eigenvalues coincide (or are zero), $Q(\rho)$ is defined to be the corresponding limit of the above expression, which is always well-defined and finite. The above functions are clearly symmetric in the eigenvalues of $\rho$ and are known to be concave. Hence, they belong to the class $\F$ and thus obey the inequality in \cref{thm:Main}.

\section{An addition rule for qudit states} \label{sec:Beamsplitter}

In this section we show how we arrive at the quantum addition rule for qudits, \eqref{eq:Box short}, for which we prove a family of EPIs. This rule is based on a continuous version of the swap operation and it mimics the behavior of a beamsplitter.

\subsection{Beamsplitter}

\begin{figure}
\centering

\begin{tikzpicture}[
  semithick,
  ray/.style = {-latex}]

  \begin{scope}[xshift = -4cm]
    \def\d{0.3}
    \node at (-2, 1+\d) {$\hat{a}$};
    \node at (-2,-1+\d) {$\hat{b}$};
    \node at ( 2, 1+\d) {$\hat{d}$};
    \node at ( 2,-1+\d) {$\hat{c}$};
    \node at ( 0, 1+\d) {$B_\a$};
    \draw[ray] (-3, 1) -- (-1, 1) -- (1,-1) -- (3,-1);
    \draw[ray] (-3,-1) -- (-1,-1) -- (1, 1) -- (3, 1);
    \draw (0,1) -- (1,0) -- (0,-1) -- (-1,0) -- cycle;
    \draw (-1,0) -- (1,0);
  \end{scope}

  \begin{scope}[xshift = 4cm]
    \def\d{0.3}
    \node at (-2, 1+\d) {$\rho_1$};
    \node at (-2,-1+\d) {$\rho_2$};
    \draw (-3, 1) -- (3, 1);
    \draw (-3,-1) -- (3,-1);
    \node[draw, fill = white, minimum height = 2.7cm, minimum width = 1.3cm] at (0,0) {$U_\a$};
  \end{scope}

%
%
%
%

\end{tikzpicture}
\caption{A comparison of a beamsplitter and the partial swap operation.\label{fig:Beamsplitter}}
\end{figure}
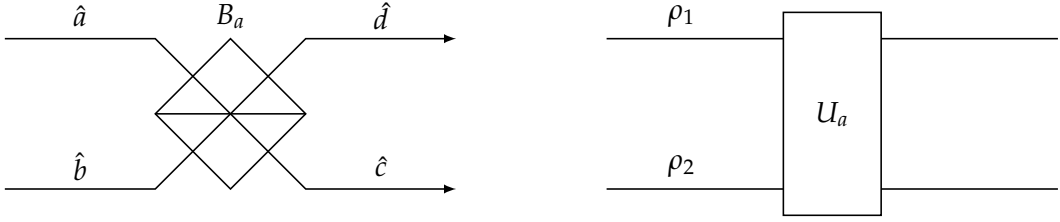

Let $\hat{a}, \hat{b}$ denote the annihilation operators of the two bosonic input modes of a beamsplitter and $\hat{c}, \hat{d}$ denote the annihilation operators of the two output modes (see \cref{fig:Beamsplitter}, left). Then the action of a beamsplitter on the input modes is described as follows~\cite{RevMod}:
\begin{equation}
  \mx{\hat{c} \\ \hat{d}} = B
  \mx{\hat{a} \\ \hat{b}},
\end{equation}
where $B$ is an arbitrary $2 \times 2$ unitary matrix also known as the \emph{scattering matrix}. In particular, let us choose
\begin{equation}
  B_\a := \mx{\sqrt{\a} & i\sqrt{1-\a} \\ i\sqrt{1-\a} & \sqrt{\a}},
\end{equation}
where $\a \in [0,1]$ is the \emph{transmissivity} of the beamsplitter (note that this choice slightly differs from the one corresponding to \cref{eq:cs}). As $\a$ changes from $1$ to $0$, $B_\a$ interpolates between the identity matrix $I$ and $i \sigma_x$ (where $\sigma_x$ is the Pauli-$x$ matrix). Indeed, we can write
\begin{equation}
  B_\a = \sqrt{\a} \, I + i \sqrt{1-\a} \, \sigma_x.
  \label{eq:Ba}
\end{equation}
In particular, up to an unimportant phase, $B_0$ acts as the swap operation $\sigma_x$ between the two modes. Thus, for intermediate values of $\a$, we can interpret $B_\a$ as an operator that partially swaps the two modes. Following this intuition, in the next section we introduce a partial swap operation for two qudits (see \cref{fig:Beamsplitter}, right) that mimics the action of $B_\a$. It is also described by a unitary matrix that is an interpolation between the identity and a swap operation (up to a phase factor), but with a swap that exchanges two qudits.

\subsection{The partial swap operator}

Let $\{\ket{i}\}_{i=1}^d$ denote the standard basis of $\cH \simeq \C^d$. Then $\{\ket{i,j}\}_{i,j=1}^d$ is an orthonormal basis of $\cH \otimes \cH$. The qudit \emph{swap operator} $S \in \cU(\cH \otimes \cH)$ is defined through its action on the basis vectors $\ket{i,j}$ as follows:
\begin{equation}
  S \ket{i,j} := \ket{j,i} \quad \text{for all } i,j \in \set{1, \dotsc, d}
\end{equation}
and can be expressed as
\begin{equation}
  S = \sum_{i,j=1}^d \ket{i}\bra{j}\otimes \ket{j}\bra{i}.
\end{equation}
Clearly, $S\ct = S$ and $S^2 = I$. In analogy with the beamsplitter scattering matrix \cref{eq:Ba}, we define a qudit partial swap operator as a unitary interpolation between the identity and the swap operator.

Since $S$ is Hermitian, we can view it as a Hamiltonian. The evolution for time $t \in \R$ under its action is given by the following unitary operator, where we used the fact that $S^2 = I$:
\begin{equation}
  \exp(i t S)
  = \sum_{n=0}^\infty \frac{(it)^n}{n!} S^n
  = I \cos t + i S \sin t. \label{eq:cis}
\end{equation}
In particular, $\exp(i (\pi/2) S)=iS$, so $\exp(itS)=(iS)^{2t/\pi}$.
Thus, as $t$ changes from $0$ to $\pi/2$, this unitary operator interpolates between $I$ and $iS$, the swap gate up to a global phase. (We are interested only in how this matrix acts under conjugation, so the global phase can be ignored.) We reparametrize \cref{eq:cis} by $(\sqrt{\a}, \sqrt{1-\a}) = (\cos t, \sin t)$ and refer to the resulting unitary as the partial swap operator.

\begin{definition}\label{def:Ua}
For $\a \in [0,1]$, the \emph{partial swap operator} $U_\a \in \cU(\C^d \x \C^d)$ is the unitary operator
\begin{equation}
  U_\a := \sqrt{\a} \, I + i \sqrt{1-\a} \, S.
  \label{eq:Ua}
\end{equation}
\end{definition}

\noindent
Up to the sign of $i$, any complex linear combination of $I$ and $S$ that is unitary is of this form~\cite{Ozols15}.
Note that $U_1 = I$ while $U_0 = iS$ acts as the qudit swap under conjugation: $U_0 (\rho_1 \x \rho_2) U_0\ct = \rho_2 \x \rho_1$.

\begin{example}[Qubit case: $d=2$]
The matrix representation of the partial swap operator for qubits is
\begin{equation}
  U_\a = \mx{
           \sqrt{\a} + i \sqrt{1-\a} & 0 & 0 & 0 \\
           0 & \sqrt{a} & i \sqrt{1-\a} & 0 \\
           0 & i \sqrt{1-\a} & \sqrt{a} & 0 \\
           0 & 0 & 0 & \sqrt{\a} + i \sqrt{1-\a}
         }.
\end{equation}
\end{example}

\subsection{The partial swap channel}

Consider a family of CPTP maps $\E_\a: \DD{d} \to \D{d}$ parameterized by $\a \in [0,1]$ and defined in terms of the partial swap operator $U_\a$ given in \cref{eq:Ua}. For any $\rho_{12} \in \cD(\cH_1 \x \cH_2)$ with $\cH_1, \cH_2 \simeq \C^d$, let
\begin{equation}
  \E_\a(\rho_{12}) := \Tr_2 \of[\big]{U_a \rho_{12} U_a\ct},
  \label{eq:E}
\end{equation}
where we trace out the second system. We are particularly interested in the case in which the input state $\rho_{12}$ is a product state, i.e., $\rho_{12} = \rho_1 \x \rho_2$ for some $\rho_1, \rho_2 \in \D{d}$. When $\E_\a$ is applied on such states, it combines the two density matrices $\rho_1$ and $\rho_2$ in a non-trivial manner, which mimics the action of a beamsplitter~\cite{KS13,KS14}. To wit, $\E_0(\rho_1 \x \rho_2) = \rho_2$ and $\E_1(\rho_1 \x \rho_2) = \rho_1$, while for general $\a \in [0,1]$ the output of $\E_\a(\rho_1 \x \rho_2)$ continuously interpolates between $\rho_1$ and $\rho_2$. The following lemma provides an explicit expression for the resulting state (this expression has independently appeared also in \cite{LMR14} in the context of quantum algorithms).

\begin{lemma}\label{lem:Swap}
Let $\E_\a$ denote the map defined in \cref{eq:E} and $[\rho_1, \rho_2] := \rho_1 \rho_2 - \rho_2 \rho_1$. Then for $\rho_1, \rho_2 \in \D{d}$,
\begin{equation}
  \E_\a(\rho_1 \x \rho_2)
  = a \rho_1 + (1-a) \rho_2 - \sqrt{a(1-a)} \, i[\rho_1, \rho_2]. \label{eq:E action}
\end{equation}
\end{lemma}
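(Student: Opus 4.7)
The plan is a direct, four-term expansion of $U_\a(\rho_1\x\rho_2)U_\a\ct$ followed by taking the partial trace over the second factor. The only nontrivial ingredient is the ``swap trick'' relating partial traces against $S$ to products of operators.

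First, using $U_\a = \sqrt{\a}\, I + i\sqrt{1-\a}\, S$ and $U_\a\ct = \sqrt{\a}\, I - i\sqrt{1-\a}\, S$, I would expand
\begin{equation*}
  U_\a(\rho_1\x\rho_2)U_\a\ct
  = \a(\rho_1\x\rho_2) + i\sqrt{\a(1-\a)}\,S(\rho_1\x\rho_2)
    - i\sqrt{\a(1-\a)}\,(\rho_1\x\rho_2)S + (1-\a)\,S(\rho_1\x\rho_2)S.
\end{equation*}
Taking $\Tr_2$ commutes with scalar multiplication and addition, so the problem reduces to computing the partial trace of each of these four terms.

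The key lemma I would invoke (and prove in one line from $S=\sum_{i,j}\ket{i}\bra{j}\x\ket{j}\bra{i}$) is: for any $A,B\in\cL(\C^d)$,
\begin{equation*}
  \Tr_2\bigl[S(A\x B)\bigr] = BA,
  \qquad
  \Tr_2\bigl[(A\x B)S\bigr] = AB.
\end{equation*}
These follow by direct computation using $\Tr_2(\ket{j}\bra{i}B) = \bra{i}B\ket{j}$ and the resolution of identity $\sum_j \ket{j}\bra{j}=I$. From these, together with the trivial identities $\Tr_2(\rho_1\x\rho_2)=\rho_1$ and $S(\rho_1\x\rho_2)S = \rho_2\x\rho_1$ (so $\Tr_2[S(\rho_1\x\rho_2)S]=\rho_2$), I get the four partial-trace values $\rho_1$, $\rho_2\rho_1$, $\rho_1\rho_2$, and $\rho_2$.

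Assembling the pieces gives
\begin{equation*}
  \E_\a(\rho_1\x\rho_2)
  = \a\rho_1 + (1-\a)\rho_2 + i\sqrt{\a(1-\a)}\,(\rho_2\rho_1 - \rho_1\rho_2)
  = \a\rho_1 + (1-\a)\rho_2 - i\sqrt{\a(1-\a)}\,[\rho_1,\rho_2],
\end{equation*}
as claimed. There is no real obstacle: the statement is essentially a bookkeeping identity, and the only thing to be careful about is the sign convention in the swap-trick identities (which of $AB$ versus $BA$ appears on each side), so I would double-check the index computation to make sure the commutator comes out with the stated sign $-i\sqrt{\a(1-\a)}[\rho_1,\rho_2]$ rather than its negative.
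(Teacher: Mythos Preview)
Your proposal is correct and follows essentially the same approach as the paper: expand $U_\a(\rho_1\x\rho_2)U_\a\ct$ into its four terms, use $S(\rho_1\x\rho_2)S=\rho_2\x\rho_1$, and compute $\Tr_2\bigl[(\rho_1\x\rho_2)S\bigr]=\rho_1\rho_2$ and $\Tr_2\bigl[S(\rho_1\x\rho_2)\bigr]=\rho_2\rho_1$ directly from $S=\sum_{i,j}\ket{i}\bra{j}\x\ket{j}\bra{i}$. The paper carries out exactly this index computation for the swap-trick identities, so there is no substantive difference.
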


\begin{remark}
When $(\sqrt{\a}, \sqrt{1-\a}) = (\cos t, \sin t)$ for some $t \in [0, \pi/2]$, this is an elliptic path in $\D{d}$:
\begin{equation}
  \E_\a(\rho_1 \x \rho_2)
  =  \frac{\rho_1 + \rho_2}{2} + \cos 2t \cdot \frac{\rho_1-\rho_2}{2} - \sin 2t \cdot \frac{i}{2}[\rho_1,\rho_2].
  \label{eq:E action_2t}
\end{equation}
If we flip the sign of $i$ or allow $t \in [-\pi/2, 0]$, we get the other half of the ellipse (see also~\cite{Ozols15}).
\end{remark}

\begin{proof}[Proof of \cref{lem:Swap}]
Using \cref{eq:Ua} we get
\begin{align}
  U_\a (\rho_1 \x \rho_2) U_\a\ct
 &= \of[\big]{\sqrt{\a} \, I + i \sqrt{1-\a} \, S} (\rho_1 \x \rho_2)
    \of[\big]{\sqrt{\a} \, I - i \sqrt{1-\a} \, S} \nonumber \\
 &= a \, \rho_1 \x \rho_2
  + (1-a) \, \rho_2 \x \rho_1
  + i \sqrt{a(1-a)} \, \of[\big]{ S (\rho_1 \x \rho_2) - (\rho_1 \x \rho_2) S }.
\end{align}
After tracing out the second system, the first two terms of the above expression give the first two terms of \cref{eq:E action}. To get the last term of \cref{eq:E action}, note that
\begin{align}
  \Tr_2 \of[\big]{(\rho_1 \x \rho_2) S}
  &= \sum_{k=1}^d (I \x \bra{k}) \of[\bigg]{ (\rho_1 \x \rho_2) \sum_{i,j=1}^d \ket{i} \bra{j} \x \ket{j} \bra{i}} (I \x \ket{k}) \nonumber \\
  &= \sum_{i,j,k=1}^d \rho_1 \ket{i} \bra{j} \x \bra{k} \rho_2 \ket{j} \braket{i}{k} \nonumber \\
  &= \rho_1 \sum_{i,j=1}^d  \bra{i} \rho_2 \ket{j} \ket{i} \bra{j} = \rho_1 \rho_2
\end{align}
and similarly $\Tr_2 \of[\big]{S (\rho_1 \x \rho_2)} = \rho_2 \rho_1$.
Hence,
\begin{equation}
  \Tr_2 \of[\big]{S (\rho_1 \x \rho_2) - (\rho_1 \x \rho_2) S}
  = \rho_2 \rho_1 - \rho_1 \rho_2
  = [\rho_2, \rho_1],
\end{equation}
which yields the last term of \cref{eq:E action}.
\end{proof}

One can check that the action of the channel $\E_\a$ on an arbitrary state $\rho \in \DD{d}$ (i.e., not necessarily a product state) can be expressed as $\E_\a(\rho) = \sum_{k=1}^d A_k \rho A_k\ct$ with the Kraus operators $A_k$ given by
\begin{equation}
  A_k := \sqrt{\a} \, I \x \bra{k} + i \sqrt{1-\a} \, \bra{k} \x I
  \quad \text{for} \quad
  k \in \set{1, \dotsc, d}.
\end{equation}
Using \cref{lem:Swap}, we introduce a qudit addition rule which combines two $d \times d$ density matrices.

\begin{definition}[Qudit addition rule]
For any $\a \in [0,1]$ and any $\rho_1, \rho_2 \in \D{d}$, we define
\begin{align}
  \rho_1 \boxplus_\a \rho_2
  &:= \E_\a(\rho_1 \x \rho_2)
    = \Tr_2 \of[\big]{ U_\a (\rho_1 \x \rho_2) U_\a\ct } \label{eq:Unitary} \\
   &= \a \rho_1 + (1-\a) \rho_2 - \sqrt{\a(1-\a)} \, i [\rho_1, \rho_2]. \label{eq:Box}
\end{align}
\end{definition}

\noindent
This operation is bilinear under convex combinations and obeys $\rho_1 \boxplus_0 \rho_2 = \rho_2$ and $\rho_1 \boxplus_1 \rho_2 = \rho_1$. A generalization of \cref{eq:Box} to three states is given in~\cite{Ozols15}.

\begin{example}[Qubit case: $d=2$]
Let $\vec{r}$, $\vec{r}_1$, $\vec{r}_2$ denote the Bloch vectors (see \cref{apx:Bloch sphere}) of states $\rho_1 \boxplus_\a \rho_2$, $\rho_1$, $\rho_2$, respectively. Using the properties of Pauli matrices, one can show that \cref{eq:Box} is equivalent to
\begin{equation}
  \vec{r}
  = \a \vec{r}_1
  + (1-\a) \vec{r}_2
  + \sqrt{\a(1-\a)} \, \vec{r}_1 \times \vec{r}_2,
  \label{eq:r}
\end{equation}
where $\vec{r}_1 \times \vec{r}_2$ denotes the cross product of $\vec{r}_1$ and $\vec{r}_2$.
\end{example}

\subsection{Partial swap vs. mixing}

Are there any other natural operations $\widetilde{\boxplus}_\a$ for combining two states for which the EPIs that we prove also hold? A trivial example is the CPTP map $\tE_a$ that acts on product states by mixing the two factors, i.e. for which
\begin{equation}\label{one}
  \tE_a (\rho \otimes \sigma) := \rho \, \widetilde{\boxplus}_a \, \sigma = a \rho + (1-a) \sigma.
\end{equation}
It has the following $2d$ Kraus operators: $A_k := \sqrt{a} \, I \x \bra{k}$ and $B_k := \bra{k} \x \sqrt{1-a} \, I$ for $k \in \set{1,\dotsc,d}$, and requires an ancillary qubit. Note, however, that for this choice of $\tE_a$ (and hence $\widetilde{\boxplus}_a$) \cref{thm:Main} is trivial as it simply restates the concavity of the function $f$.

In contrast, the partial swap channel $\E_a$ has the following features: (i) it yields non-trivial EPIs (that are not simply a statement of concavity), and (ii) it does not require an ancillary qubit, so it has only $d$ Kraus operators, the minimal number required for tracing out a $d$-dimensional system.

\section{Proof of \cref{thm:Main}}

In this section we prove \cref{thm:Main}, our main result. Due to the very different setup as compared to the work of K\"onig and Smith, with our addition rule acting at the level of states rather than at the level of field operators, our mathematical treatment is entirely different from theirs and bears no obvious similarity with the classical case either. Instead of proceeding via quantum generalizations of Young's inequality, Fisher information and de Bruijn's identity, the main ingredient in our proof is the following majorization relation relating the spectrum of the output state to the spectra of the input states.

\begin{theorem}\label{thm:maj}
For any pair of density matrices $\rho, \sigma \in \D{d}$ and any $\a \in [0,1]$,
\begin{equation}
  \lambda(\rho \boxplus_\a \sigma) \prec \a \lambda(\rho) + (1-\a) \lambda(\sigma).
  \label{eq:Majorization}
\end{equation}
\end{theorem}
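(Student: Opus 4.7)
The plan is to prove \cref{thm:maj} via the Ky Fan variational characterization of majorization. Specifically, the desired majorization is equivalent to showing that for every $K \in \{1,\dotsc,d-1\}$ and every rank-$K$ orthogonal projection $P$ on $\C^d$,
\begin{equation*}
\Tr\of[\big]{P(\rho\boxplus_\a\sigma)} \leq \a R_K + (1-\a) S_K,
\end{equation*}
where $R_K := \sum_{i=1}^K \lambda_i^\dar(\rho)$ and $S_K := \sum_{j=1}^K \lambda_j^\dar(\sigma)$; the equality constraint at $K=d$ is automatic since $\rho\boxplus_\a\sigma$ is a density matrix. Substituting the closed form from \cref{lem:Swap}, the LHS becomes $\a\Tr(P\rho) + (1-\a)\Tr(P\sigma) - i\sqrt{\a(1-\a)}\Tr(P[\rho,\sigma])$, and applying Ky Fan to $\rho$ and $\sigma$ separately gives the nonnegative ``slacks'' $\Delta_\rho := R_K - \Tr(P\rho)$ and $\Delta_\sigma := S_K - \Tr(P\sigma)$. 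The problem thus reduces to bounding the commutator contribution by $\a\Delta_\rho + (1-\a)\Delta_\sigma$.

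By the weighted AM--GM inequality, a sufficient condition is the ``Cauchy--Schwarz with slack'' estimate
\begin{equation*}
\abs{\Tr(P[\rho,\sigma])}^2 \leq 4\,\Delta_\rho\,\Delta_\sigma.
\end{equation*}
To attack this, I would work in the block decomposition $\C^d = \mathrm{range}(P)\oplus\mathrm{range}(P^\perp)$. Using Hermiticity of $\rho$ and $\sigma$, a short computation gives $\Tr(P[\rho,\sigma]) = 2i\sum_{a,b}\mathrm{Im}\of[\big]{\overline{\rho_{ba}}\sigma_{ba}}$ with $a$ ranging over $\mathrm{range}(P)$ and $b$ over $\mathrm{range}(P^\perp)$, and Cauchy--Schwarz on these matrix entries yields $\abs{\Tr(P[\rho,\sigma])}^2 \leq 4\norm{P\rho P^\perp}_F^2\cdot\norm{P\sigma P^\perp}_F^2$. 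The argument then collapses to the ``off-diagonal block norm bound'' $\norm{P\rho P^\perp}_F^2 \leq \Delta_\rho$, together with its analogue for $\sigma$: multiplying these two bounds recovers the displayed estimate, and combining everything completes the proof.

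The main obstacle is the off-diagonal block norm bound itself. Rewriting the LHS via the block identity $\norm{P\rho P^\perp}_F^2 = \Tr(P\rho^2) - \Tr((P\rho P)^2)$, it encodes a delicate interplay between Cauchy interlacing---which forces the eigenvalues of the compression $P\rho P$ to lie below those of $\rho$---and the contractivity $\rho^2\leq\rho$ that holds for any density matrix. In the rank-one case ($K=1$), it collapses cleanly to $(1-\tau)(r_1-\tau)\geq 0$ with $\tau=\Tr(P\rho)$, which is immediate from $\tau\leq r_1\leq 1$. For general $K$, I expect the cleanest route to be a variational argument on the Grassmannian of rank-$K$ projections: after passing to the eigenbasis of $\rho$, the functional $\Delta_\rho - \norm{P\rho P^\perp}_F^2$ becomes a quadratic form in the entries $P_{ij}$ that is tractable using the PSD constraint $\abs{P_{ij}}^2 \leq P_{ii}P_{jj}$, and one verifies that the critical points of this functional on the Grassmannian occur precisely at projections diagonal in the eigenbasis of $\rho$, where the inequality holds by direct inspection.
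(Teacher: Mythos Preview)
Your reduction is clean and correct up to the point where everything hinges on the ``off-diagonal block norm bound''
\[
  \norm{P\rho P^\perp}_F^2 \;\le\; R_K - \Tr(P\rho).
\]
This is where the real content lies, and here the proposal becomes a sketch rather than a proof. The assertion that the critical points of $\Phi(P) := \Delta_\rho - \norm{P\rho P^\perp}_F^2$ on the Grassmannian are precisely the projections commuting with $\rho$ is not established: writing out the Euler--Lagrange condition gives $[P,\rho+\rho^2] = 2[P\rho P,\rho]$, i.e.\ in the eigenbasis of $\rho$,
\[
  P_{ij}\,(1+\lambda_i+\lambda_j) \;=\; 2\sum_k \lambda_k\,P_{ik}P_{kj}
  \qquad (i\neq j),
\]
and showing that this forces $[P,\rho]=0$ for generic spectra is itself a nontrivial algebraic step that you have not supplied (it is easy for $K=1$, much less so in general). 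Your alternative suggestion of working only with the bound $\abs{P_{ij}}^2\le P_{ii}P_{jj}$ discards the projection constraint $P^2=P$ and is too weak on its own. So the argument has a genuine gap exactly at the step carrying all the weight.

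For comparison, the paper takes a rather different route. Instead of Ky Fan on the \emph{top} eigenvalues, it flips to the bottom ones and builds an explicit $1\times 3$ block factorization $\rho\boxplus_\a\sigma = TT\ct$ with $T=A+iB$, where $A$ involves $(\rho-\rho^2)^{1/2}$ and satisfies $AA\ct=\a\rho$, $BB\ct=(1-\a)\sigma$. The commutator term then disappears for free as the trace of a commutator, and the problem reduces to a \emph{different} single-state inequality involving compressions of $(\rho-\rho^2)^{1/2}$ and $\rho$; this is handled via the scalar bound $xy+\sqrt{x(1-x)y(1-y)}\ge\min\{x,y\}$ together with von~Neumann's lemma on the intersection of two projection ranges. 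Your formulation is arguably more natural---it isolates a clean operator inequality for a single density matrix and avoids the somewhat mysterious square root---and it is in the spirit of the later simplification by Carlen, Lieb, and Loss that the paper itself cites. But to make it a proof you still have to actually establish the block-norm bound, and the variational sketch you give does not yet do that.
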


\begin{remark}
For fields corresponding to the action of a beamsplitter, the addition rule translates to linearly combining the covariance matrices $\gamma$~\cite{KS14}:
\begin{equation}
  \gamma(\rho \boxplus_a \sigma) = a \gamma(\rho) + (1-a) \gamma(\sigma).
\end{equation}
When the incoming quantum fields are both Gaussian, an inequality closely related to \cref{eq:Majorization} holds. Denoting by $\nu(A)$ the symplectic eigenvalues of a covariance matrix $A$, Hiroshima \cite{Hiroshima} has shown that for any $A,B\ge0$,
\begin{equation}
\nu(A+B)\prec^w \nu(A)+\nu(B),
\end{equation}
where $\prec^w$ stands for \emph{weak supermajorization} \cite{Bhatia}. Applied to $\gamma(\rho)$ and $\gamma(\sigma)$, this inequality can be used to derive an EPI for Gaussian fields in a similar way as we have done for qudits.
\end{remark}

We will first show how our main result follows from \cref{thm:maj}, as this is straightforward, and then proceed with the proof of the latter, which is the bulk of the work. We restate \cref{thm:Main} here, for convenience.

\Main*

\begin{proof}
Assume \cref{thm:maj} has been established. Let $\trho, \tsigma \in \D{d}$ be diagonal states whose entries are the eigenvalues of $\rho$ and $\sigma$ (respectively), arranged in non-increasing order. Since $\lambda(\trho) = \lambda(\rho)$ and $\lambda(\tsigma) = \lambda(\sigma)$, \cref{eq:Majorization} can be equivalently written as
\begin{align}
  \lambda(\rho \boxplus_\a \sigma)
  &\prec \a \lambda(\trho) + (1-\a) \lambda(\tsigma), \nonumber \\
  &= \lambda\of[\big]{\a \trho + (1-\a) \tsigma}. \label{eq:prec}
\end{align}
For any function $f \in \F$ (see \cref{def:F}) \cref{eq:prec} implies that
\begin{align}
  f(\rho \boxplus_\a \sigma)
  &\geq f\of[\big]{\a \trho + (1-\a) \tsigma} \nonumber \\
  &\geq \a f(\trho) + (1-\a) f(\tsigma) \nonumber \\
  & = \a f(\rho) + (1-\a) f(\sigma),
\end{align}
where the first inequality follows by Schur-concavity, the second inequality follows from concavity, and the last line follows by symmetry. Thus, we have arrived at the statement of \cref{thm:Main}.
\end{proof}

It remains to prove \cref{thm:maj}. For this we will need the following two lemmas.

\begin{lemma}[von Neumann \protect{\cite[p.~55]{vonNeumann}}]\label{lem:vonN}
Let $\mc{L}$ and $\mc{M}$ be two subspaces of a vector space and let $P(\mc{L})$ and $P(\mc{M})$ denote the corresponding projectors. Then
\begin{equation}
  P(\mc{L} \cap \mc{M}) = \lim_{n \to \infty} \of[\big]{P(\mc{L}) P(\mc{M})}^n.
\end{equation}
\end{lemma}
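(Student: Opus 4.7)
The plan is to set $P := P(\mc{L})$, $Q := P(\mc{M})$, $T := PQP$, and reduce the claim to the convergence $T^{n-1} \to P(\mc{L} \cap \mc{M})$ by exploiting the algebraic identity
\begin{equation*}
  (PQ)^n = T^{n-1} \cdot PQ, \qquad n \geq 1.
\end{equation*}
This identity is immediate by induction, since $T \cdot PQ = PQP \cdot PQ = PQ \cdot P^2 \cdot Q = (PQ)^2$. Once it is established, I will show that the prospective limit $E := P(\mc{L} \cap \mc{M})$ absorbs $P$ and $Q$ on the right: $\mathrm{Ran}(I-P) = \mc{L}^\perp$ is contained in $(\mc{L} \cap \mc{M})^\perp = \ker E$, so $EP = E$, and $EQ = E$ follows symmetrically. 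Hence $E \cdot PQ = E$, so once $T^{n-1} \to E$ is in hand the conclusion $(PQ)^n \to E$ follows.

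To establish $T^n \to E$, I will first observe that $T$ is self-adjoint with $0 \leq T \leq I$, since $T = (QP)^*(QP)$ and $\|PQP\| \leq \|P\|\,\|Q\|\,\|P\| \leq 1$. In finite dimensions (the setting relevant for this paper) the spectral theorem immediately gives $T^n \to P(\ker(I-T))$, as eigenvalues in $[0,1)$ decay to $0$ while the eigenspace for eigenvalue $1$ is preserved. In infinite dimensions the same conclusion follows by observing that $T^n - T^{n+1} = T^{n/2}(I-T)T^{n/2} \geq 0$, so $\{T^n\}$ is monotone nonincreasing and bounded below by $0$, hence strongly convergent to a self-adjoint $T_\infty$ with $T T_\infty = T_\infty$, which forces $\mathrm{Ran}(T_\infty) \subseteq \ker(I-T)$ and ultimately $T_\infty = P(\ker(I-T))$.

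The delicate step, and the one I expect to carry the weight of the argument, is the identification $\ker(I-T) = \mc{L} \cap \mc{M}$. The inclusion ``$\supseteq$'' is trivial, since $Px = Qx = x$ implies $Tx = x$. For ``$\subseteq$'', suppose $Tx = x$. Then $x \in \mathrm{Ran}(T) \subseteq \mathrm{Ran}(P) = \mc{L}$, so $Px = x$, whence $PQx = x$. Taking the inner product with $x$ and using $P = P^*$ yields $\langle Qx, x \rangle = \|x\|^2$, and since $Q^2 = Q = Q^*$ the left-hand side equals $\|Qx\|^2$. Combining $\|Qx\|^2 = \|x\|^2$ with the Pythagorean identity $\|x\|^2 = \|Qx\|^2 + \|(I-Q)x\|^2$ forces $(I-Q)x = 0$, so $x \in \mc{M}$, completing the argument.
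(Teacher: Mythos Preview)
Your argument is correct. The paper does not supply its own proof of this lemma; it is stated with a citation to von~Neumann and then used as a black box inside the proof of \cref{thm:maj} to bound $\Tr(P_kQ_m)^2$ from below. So there is no ``paper's proof'' to compare against, only the cited result.

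Your route via $T = PQP$ is clean and standard. The identity $(PQ)^n = T^{n-1}PQ$, the absorption $EPQ = E$, and the identification $\ker(I-T) = \mc{L}\cap\mc{M}$ are all verified correctly; in particular, the equality-case Pythagorean step for the inclusion ``$\subseteq$'' is the right idea. Since the paper works throughout with qudits ($\cH \simeq \C^d$), only the finite-dimensional spectral argument is actually needed here, and your treatment of that case is complete. The infinite-dimensional sketch is fine too: the one point you left implicit---that $T_\infty$ is an orthogonal \emph{projection} onto $\ker(I-T)$---follows from $TT_\infty = T_\infty$ (hence $T^nT_\infty = T_\infty$, so $T_\infty^2 = T_\infty$) together with $T_\infty x = x$ for $x \in \ker(I-T)$.
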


\begin{lemma}\label{lem:Min}
For $0 \le x,y \le 1$, the following inequality holds:
\begin{equation}
  x y + \sqrt{x (1-x) y (1-y)} \geq \min \set{x,y}.
\end{equation}
\end{lemma}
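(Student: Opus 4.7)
The plan is to reduce the inequality to a trivial one by case analysis and squaring. Without loss of generality assume $x \le y$, so $\min\{x,y\} = x$ and the desired inequality becomes
\begin{equation*}
\sqrt{x(1-x)y(1-y)} \ge x - xy = x(1-y).
\end{equation*}
Both sides are nonnegative (the right side because $0 \le x$ and $y \le 1$), so the inequality is equivalent to its square
\begin{equation*}
x(1-x)y(1-y) \ge x^2(1-y)^2.
\end{equation*}

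Next I would dispose of the degenerate cases $x = 0$ and $y = 1$, in which both sides vanish. Outside these cases we have $x > 0$ and $1 - y > 0$, so we can divide by $x(1-y)$ on both sides to obtain the equivalent inequality
\begin{equation*}
(1-x)y \ge x(1-y),
\end{equation*}
which after expansion reads $y - xy \ge x - xy$, i.e.\ $y \ge x$. This is precisely our assumption $x \le y$, which finishes the argument.

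There is essentially no obstacle here: the whole proof is a one-line squaring argument once we use symmetry in $x,y$ to pick the right minimum, and careful nonnegativity bookkeeping is the only thing to check before squaring. The symmetric statement ($\min\{x,y\} = y$) follows by relabeling.
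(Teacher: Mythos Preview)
Your proof is correct and essentially identical to the paper's: both assume $x\le y$, reduce to showing $\sqrt{x(1-x)y(1-y)}\ge x(1-y)$, and observe this follows from $x(1-y)\le y(1-x)$, which is just $x\le y$. The only cosmetic difference is that you square and divide while the paper takes a geometric mean, but the underlying inequality and logic are the same.
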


\begin{proof}
Without loss of generality, we can assume that $0 \le x \le y \le 1$, so we need to show that
\begin{equation}
  x \le x y + \sqrt{x (1-x) y (1-y)}.
\end{equation}
Since $x \le y$, we have $x - x y \le y - y x$, or $x (1-y) \le y (1-x)$. By the above assumption, each side is non-negative. Taking the geometric mean of each side with $x(1-y)$ then yields
\begin{equation}
  x (1-y) \le \sqrt{x (1-y) y (1-x)},
\end{equation}
which is equivalent to what we had to prove.
\end{proof}

Now we are ready to prove \cref{thm:maj}. (Note that subsequently our proof has been simplified by Carlen, Lieb, and Loss~\cite{CLL16}.)

\begin{proof}[Proof of \cref{thm:maj}]
The expression $\rho \boxplus_\a \sigma = \a \rho + (1-\a) \sigma - \sqrt{\a(1-\a)} \, i[\rho,\sigma]$ can be written as follows:
\begin{equation}
  \rho \boxplus_\a \sigma
  = \a(\rho - \rho^2)
  + (1-\a) (\sigma - \sigma^2)
  + (\sqrt{\a}\,\rho + i\sqrt{1-\a}\,\sigma)
    (\sqrt{\a}\,\rho + i\sqrt{1-\a}\,\sigma)\ct.
\end{equation}
It is convenient to express the state $\rho \boxplus_\a \sigma$ as $T T\ct$ for some $1 \times 3$ block-matrix
\begin{equation}
  T = \left(T_1  \quad T_2 \quad T_3 \right).
\end{equation}
We choose $T := A + iB$ where $A$ and $B$ are the following $1 \times 3$ block matrices:
\begin{IEEEeqnarray}{rCr.lc?c?c}
  A &:=& \sqrt{\a}   & \of[\Big]{& (\rho-\rho^2)^{1/2} & 0 & \rho },       \label{eq:A} \\
  B &:=& \sqrt{1-\a} & \of[\Big]{& 0 & (\sigma-\sigma^2)^{1/2} & \sigma }. \label{eq:B}
\end{IEEEeqnarray}
Here the operator square roots are well-defined, since $X \geq X^2$ for any matrix $I \geq X \geq 0$. Also, note that all blocks of $A$ and $B$ (and hence of $T$) are Hermitian. One can easily check that
\begin{align}
  AA\ct &= \a (\rho-\rho^2) + \a \rho^2 = \a \rho, \label{eq:AA} \\
  BB\ct &= (1-\a) (\sigma-\sigma^2) + (1-\a) \sigma^2 = (1-\a) \sigma, \\
  TT\ct &= (A + iB)(A\ct - iB\ct) \nonumber\\
        &= AA\ct + BB\ct - i(AB\ct - BA\ct) \nonumber\\
        &= \a \rho + (1-\a) \sigma - i\sqrt{\a(1-\a)} [\rho, \sigma] \nonumber\\
        &= \rho \boxplus_\a \sigma.
\end{align}

Given these expressions, we can rewrite \cref{eq:Majorization} as
\begin{equation}
  \lambda(TT\ct) \prec \lambda(AA\ct) + \lambda(BB\ct).
  \label{eq:TAB}
\end{equation}
If $A$ and $B$ had been positive semidefinite, this inequality would have followed straight-away from Theorem 3.29 in \cite{Zhan}. Nevertheless, we can adapt the proof of this theorem to our needs. Note that
\begin{equation}
  \Tr(TT\ct) = \Tr(AA\ct) + \Tr(BB\ct) - i \Tr[A,B] = \Tr(AA\ct) + \Tr(BB\ct),
\end{equation}
since $\Tr[A,B]=0$ by the cyclicity of the trace. Hence,
\begin{equation}
  \sum_{j=1}^d \lambda_j(TT\ct) = \sum_{j=1}^d \lambda_j(AA\ct) + \sum_{j=1}^d \lambda_j(BB\ct).
\end{equation}
From this and \cref{def:Majorization} we see that \cref{eq:TAB} is equivalent to
\begin{equation}
  \sum_{j=d-k+1}^d \lambda_j(TT\ct) \ge
  \sum_{j=d-k+1}^d \lambda_j(AA\ct) +
  \sum_{j=d-k+1}^d \lambda_j(BB\ct),
  \qquad
  \forall \, k \in \set{1, \dotsc, d}.
  \label{eq:Majorization sums}
\end{equation}
The left-hand side of the above inequality can be expressed variationally as follows (see e.g.~Corollary 4.3.39 in \cite{HornJohnson}):
\begin{equation}
  \sum_{j=d-k+1}^d \lambda_j(TT\ct) = \min \set[\big]{\Tr (U_k\ct TT\ct U_k) : U_k \in M_{d,k}, U_k\ct U_k = I_k},
\end{equation}
where $ M_{d,k}$ denotes the set of $d \times k$ matrices, and  $I_k \in M_{k,k}$ is the identity matrix. Note that the constraint $U_k\ct U_k = I_k$ is equivalent to $U_k$ being a $d \times k$ matrix consisting of $k$ columns of a $d \times d$ unitary matrix $U$. We can express $U_k$ as $U_k = U I_{k,d}$, where $I_{k,d} := I_k \oplus 0_{d-k}$, with $0_k \in M_{d-k,d-k}$ being a matrix with all entries equal to zero. Hence, $U_k U_k\ct = U I_{k,d} U\ct$, which is a projector of rank $k$. Clearly, $U_k U_k\ct \leq I_d$, so that
\begin{align}
  \Tr (U_k\ct TT\ct U_k)
  &=   \sum_{l=1}^3 \Tr (U_k\ct T_l T_l\ct U_k) \nonumber\\
  &\ge \sum_{l=1}^3 \Tr (U_k\ct T_l U_k\; U_k\ct T_l\ct U_k) \nonumber\\
  &=   \sum_{l=1}^3 \Tr \of[\big]{ U_k\ct (A_l + i B_l) U_k\; U_k\ct (A_l\ct - i B_l\ct) U_k } \nonumber\\
  &=   \sum_{l=1}^3 \Tr (U_k\ct A_l U_k)^2
     + \sum_{l=1}^3 \Tr (U_k\ct B_l U_k)^2
    -i \sum_{l=1}^3 \Tr [U_k\ct A_l U_k, U_k\ct B_l U_k] \nonumber\\
&=   \sum_{l=1}^3 \Tr (U_k\ct A_l U_k)^2
     + \sum_{l=1}^3 \Tr (U_k\ct B_l U_k)^2\label{eq:Last}
\end{align}
where we used that $A_l\ct = A_l$ and $B_l\ct = B_l$ for all $l$.

To complete the proof of \cref{eq:Majorization sums}, we will show that $\sum_{l=1}^3 \Tr (U_k\ct A_l U_k)^2 \geq \sum_{j=d-k+1}^d \lambda_j(AA\ct)$, with a corresponding inequality for $B$ following in the same way. From the definition of $A$ we have
\begin{equation}
  \sum_{l=1}^3 \Tr (U_k\ct A_l U_k)^2
  = \a \of[\Big]{ \Tr \of[\big]{ U_k\ct (\rho-\rho^2)^{1/2} U_k }^2 + \Tr(U_k\ct \rho U_k)^2 }.
\end{equation}
Recall from \cref{eq:AA} that $AA\ct = \a \rho$. Therefore, we have to show that
\begin{equation}
  \Tr \of[\big]{ U_k\ct (\rho-\rho^2)^{1/2} U_k }^2 + \Tr(U_k\ct \rho U_k)^2 \geq \sum_{j=d-k+1}^d \lambda_j(\rho),
  \qquad
  \forall \, k \in \set{1, \dotsc, d}.
  \label{eq:A majorization}
\end{equation}
Let $\rho = \sum_{i=1}^d \lambda_i \proj{\psi_i}$ be the eigenvalue decomposition of $\rho$, with the eigenvalues $\lambda_i$ being arranged in non-increasing order:
\begin{equation}
  \lambda_1 \geq \lambda_2 \geq \dotsb \geq \lambda_d.
  \label{eq:Sorted lambdas}
\end{equation}
Then the right-hand side of \cref{eq:A majorization} is $\sum_{j=d-k+1}^d \lambda_j$ while the left-hand side is
\begin{equation}
  \Tr \of*{ \sum_{i=1}^d \sqrt{\lambda_i (1-\lambda_i)} \, U_k\ct \proj{\psi_i} U_k }^2
+ \Tr \of*{ \sum_{i=1}^d \lambda_i \, U_k\ct \proj{\psi_i} U_k }^2.
\end{equation}
Expanding the squares gives
\begin{equation}
  \sum_{i,j=1}^d \of*{ \sqrt{\lambda_i (1-\lambda_i) \lambda_j (1-\lambda_j)} + \lambda_i \lambda_j }
  \Tr \of*{ U_k\ct \proj{\psi_i} U_k U_k\ct \proj{\psi_j} U_k }.
  \label{eq:roots}
\end{equation}
Noting that
\begin{equation}
  C_{ij}
  := \Tr \of*{ U_k\ct \proj{\psi_i} U_k U_k\ct \proj{\psi_j} U_k }
  = \abs[\big]{ \bra{\psi_i} U_k U_k\ct \ket{\psi_j} }^2
  \label{eq:Pij}
\end{equation}
is a non-negative real quantity, we can use \cref{lem:Min} to show that the expression \eqref{eq:roots}, and hence the left-hand side of \cref{eq:A majorization}, is bounded below by
\begin{equation}
  \sum_{i,j=1}^d \min \set{\lambda_i,\lambda_j} \, C_{ij}.
\end{equation}

Let $\Lambda$ be the matrix whose elements are $\Lambda_{ij} := \min \set{\lambda_i,\lambda_j}$:
\begin{equation}
  \Lambda = \mx{
    \lambda_1 & \lambda_2 & \lambda_3 & \cdots & \lambda_d \\
    \lambda_2 & \lambda_2 & \lambda_3 & \cdots & \lambda_d \\
    \lambda_3 & \lambda_3 & \lambda_3 & \cdots & \lambda_d \\
    \vdots    & \vdots    & \vdots    & \ddots & \vdots    \\
    \lambda_d & \lambda_d & \lambda_d & \cdots & \lambda_d
  }.
\end{equation}
For $m \in \set{1, \dotsc, d}$, we define matrices $E_m$ of size $d \times d$ such that
\begin{equation}
  (E_m)_{ij}
  := \begin{cases}
      1, & \text{for } 1 \le i,j \le m \\
      0, & \text{otherwise}.
    \end{cases}
\end{equation}
Then we can write
\begin{equation}
  \Lambda = \lambda_d E_d + \sum_{m=1}^{d-1} (\lambda_m - \lambda_{m+1}) E_m.
\end{equation}
Hence,
\begin{equation}
  \sum_{i,j=1}^d \min \set{\lambda_i,\lambda_j} \, C_{ij}
  \equiv \sum_{i,j=1}^d  \Lambda_{ij} C_{ij}= \lambda_d \sum_{i,j=1}^d (E_d \circ C)_{ij}
  + \sum_{m=1}^{d-1} (\lambda_m - \lambda_{m+1}) \sum_{i,j=1}^d (E_m \circ C)_{ij},
  \label{eq:Differences}
\end{equation}
where we use the notation $(A \circ B)_{ij} := A_{ij} B_{ij}$ for $d \times d$ matrices $A$ and $B$.

If we define $\pi(m) := \sum_{i,j=1}^d (E_m \circ C)_{ij} = \sum_{i,j=1}^m C_{ij}$, we can write \cref{eq:Differences} as
\begin{equation}
  \sum_{i,j=1}^d \min \set{\lambda_i,\lambda_j} \, C_{ij}
  = \lambda_d \pi(d) + \sum_{m=1}^{d-1} (\lambda_m - \lambda_{m+1}) \pi(m).
  \label{eq:Arcs}
\end{equation}
Recall from \cref{eq:Sorted lambdas} that the eigenvalues $\lambda_i$ are arranged in non-increasing order, so all coefficients $\lambda_d$ and $\lambda_m - \lambda_{m+1}$ are non-negative, so it only remains to find a lower bound on $\pi(m)$.

Recall from \cref{eq:Pij} that
\begin{align}
  \sum_{i,j=1}^m C_{ij}
  &= \sum_{i,j=1}^m \Tr \of*{ U_k\ct \proj{\psi_i} U_k U_k\ct \proj{\psi_j} U_k } \nonumber \\
  &= \Tr \of[\big]{ U_k\ct Q_m U_k U_k\ct Q_m U_k } \nonumber \\
  &= \Tr \of      { P_k Q_m }^2,
\end{align}
where $P_k := U_k U_k\ct$ and $Q_m := \sum_{i=1}^m \proj{\psi_i}$ are rank-$k$ and rank-$m$ projectors, respectively. Note that $\Tr \of{ P_k Q_m }^n$ is monotonically decreasing as a function of $n \in \N$, so
\begin{equation}
  \Tr \of{ P_k Q_m }^2
  \geq \lim_{n \to \infty} \Tr \of{ P_k Q_m }^n
     = \Tr \lim_{n \to \infty} \of{ P_k Q_m }^n
     = \Tr R
\end{equation}
where $R := \lim_{n \to \infty} \of{ P_k Q_m }^n$. If $\mc{S}_k$ and $\mc{S}_m$ are the subspaces of $\C^d$ corresponding to projectors $P_k$ and $Q_m$ respectively, then, by \cref{lem:vonN}, $R$ is the projector onto $\mc{S}_k \cap \mc{S}_m$. Since $\dim \mc{S}_k = k$ and $\dim \mc{S}_m = m$, we get
\begin{equation}
  \Tr R
  = \dim (\mc{S}_k \cap \mc{S}_m)
  \geq \max \set{0, k + m - d}.
\end{equation}

Putting everything together, we obtain
\begin{equation}
  \pi(m) = \sum_{i,j=1}^m C_{ij} \geq \max \set{0, k + m - d}.
\end{equation}
When we substitute this in \cref{eq:Arcs}, we get
\begin{align}
  \sum_{i,j=1}^d \min \set{\lambda_i,\lambda_j} \, C_{ij}
  \geq \lambda_d k + \sum_{m=d-k+1}^{d-1} (\lambda_m-\lambda_{m+1}) (k + m - d).
\end{align}
The right-hand side of the above inequality is simply equal to
\begin{equation}
      (\lambda_{d-k+1} - \lambda_{d-k+2})
  + 2 (\lambda_{d-k+2} - \lambda_{d-k+3})
  + \dotsb
  + (k-1) (\lambda_{d-1} - \lambda_{d})
  + k \lambda_d
  = \sum_{j=d-k+1}^d \lambda_j,
\end{equation}
which proves \cref{eq:A majorization} and therefore the theorem.
\end{proof}

\section{Concavity of entropy power and entropy photon number} \label{sec:Concavity}

In this section we prove \cref{thm:Concavity}, which establishes concavity of the entropy power $\EP_c$ and the entropy photon number $N_c$ for qudits (see \cref{def:EP and N}). Note that both $\EP_c(\rho)$ and $N_c(\rho)$ are twice-differentiable and monotonously increasing functions of the von~Neumann entropy $H(\rho)$. Hence, our strategy for establishing \cref{thm:Concavity} is to solve the following more general problem.

\begin{problem}
Let $h: \R \to \R$ be any twice-differentiable and monotonously increasing function. For which values of $\cs \geq 0$ is $f_\cs(\rho) := h(\cs H(\rho))$ concave on the set of $d$-dimensional quantum states?
\end{problem}

Since $H(\rho)$ is already concave, the function $f_\cs(\rho) = h(\cs H(\rho))$ is guaranteed to be concave for any $\cs \geq 0$ whenever $h$ is monotonously increasing \emph{and} concave. However, there are many more functions $h$ which are not necessarily be concave---in fact, they could even be convex---yet produce a concave function $f_\cs$ for a limited range of constants $\cs$. Our goal is to obtain a condition on pairs $(h,\cs)$ under which the function $f_c$ is concave.

To prove the concavity of $f_c$ on $\D{d}$, we fix any two states $\rho, \sigma \in \D{d}$ (we assume without loss of generality that $\rho$ and $\sigma$ have full rank---the general case follows by continuity). We then define a function $\v : [0,1] \to \R$ as follows:
\begin{equation}
  \v(p) := f_c\of[\big]{p \rho + (1-p) \sigma}
  \label{eq:v}
\end{equation}
(note that $\v(p)$ implicitly depends also on $\cs$). Our goal now is to determine the range of values of $\cs$ for which
\begin{equation}
  \v''(p) \leq 0
  \qquad \forall \, p \in [0,1] \text{\quad and \quad} \forall \, \rho, \sigma \in \D{d}.
  \label{eq:v''}
\end{equation}
This would imply that $\v(p)$ is concave and, in particular, that $\v(p) \geq p \v(1) + (1-p) \v(0)$, which by \cref{eq:v} is equivalent to concavity of $f_\cs$. The following lemma uses this approach to obtain the desired condition on
$(h,\cs)$.

\begin{lemma}\label{lem:concave-cond1}
Let $h: \R \to \R$ be any twice-differentiable, monotonously increasing function. Then the function $f_\cs(\rho) := h(\cs H(\rho))$ with $c \geq 0$ and $\rho \in \D{d}$ is concave on the set of quantum states $\D{d}$ if, for any probability distribution $q = (q_1, \dotsc, q_d)$, the following condition is satisfied:
\begin{equation}
  \cs \, \frac{h''(\cs H(q))}{h'(\cs H(q))} \le \frac{1}{\L(q)-H(q)^2}
  \label{eq:condition q}
\end{equation}
where $H(q) = - \sum_{i=1}^d q_i \log q_i$ is the Shannon entropy of $q$ and $\L(q):= \sum_{i=1}^d q_i (\log q_i)^2$.
\end{lemma}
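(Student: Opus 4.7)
The plan is to determine when $\v''(p) \leq 0$ for the function $\v$ defined in \cref{eq:v}. Write $\omega(p) := p\rho + (1-p)\sigma$ and $\eta(p) := H(\omega(p))$, so that $\v(p) = h(\cs\,\eta(p))$. By the chain rule,
\[
  \v'(p) = \cs\, h'(\cs\,\eta(p))\,\eta'(p),
  \qquad
  \v''(p) = \cs^2\, h''(\cs\,\eta(p))\,\eta'(p)^2 + \cs\, h'(\cs\,\eta(p))\,\eta''(p).
\]
Since $h' > 0$, the inequality $\v''(p) \leq 0$ is equivalent to
\[
  \cs\,\frac{h''(\cs\,\eta(p))}{h'(\cs\,\eta(p))} \leq -\frac{\eta''(p)}{\eta'(p)^2}.
\]
Hence it suffices to lower bound $-\eta''(p)/\eta'(p)^2$ by a quantity depending only on the eigenvalue vector $q$ of $\omega(p)$, since $\eta(p) = H(q)$ by unitary invariance.

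I would next compute $\eta'$ and $\eta''$ by operator calculus. The identity $\frac{d}{dp}\Tr F(\omega) = \Tr(\omega' F'(\omega))$ applied to $F(x) = -x\log x$ and $\omega'(p) = \rho-\sigma$ gives $\eta'(p) = -\Tr((\rho-\sigma)\log\omega)$. Combining the Dyson representation $\frac{d}{dp}\log\omega = \int_0^\infty (\omega+sI)^{-1}(\rho-\sigma)(\omega+sI)^{-1}\,ds$ with the standard evaluation $\int_0^\infty \frac{ds}{(q_i+s)(q_j+s)} = \frac{\log q_i-\log q_j}{q_i-q_j}$, and working in an eigenbasis of $\omega(p)$ in which $q_i$ are the eigenvalues and $V_{ij}$ are the matrix elements of $\rho-\sigma$, yields
\[
  \eta'(p) = -\sum_{i=1}^d V_{ii}\log q_i,
  \qquad
  -\eta''(p) = \sum_{i,j=1}^d |V_{ij}|^2\,\frac{\log q_i-\log q_j}{q_i-q_j},
\]
with the $i=j$ term understood as $|V_{ii}|^2/q_i$. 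The key point now is that each divided-difference factor is non-negative (by monotonicity of the logarithm), so discarding the off-diagonal contributions only weakens the numerator. Writing $d_i := V_{ii}$, which is real and satisfies $\sum_i d_i = \Tr(\rho-\sigma) = 0$, this gives the commutative lower bound
\[
  -\frac{\eta''(p)}{\eta'(p)^2} \geq \frac{\sum_{i=1}^d d_i^2/q_i}{\bigl(\sum_{i=1}^d d_i\log q_i\bigr)^2}.
\]

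The remaining step is to minimize the right-hand side over real $d$ with $\sum_i d_i = 0$. The ratio is scale-invariant, so the problem reduces to minimizing $\sum_i d_i^2/q_i$ subject to $\sum_i d_i = 0$ and $\sum_i d_i\log q_i = 1$. A routine Lagrange multiplier argument (equivalently, a $q$-weighted Cauchy--Schwarz) produces the optimizer $d_i = q_i(\lambda + \mu\log q_i)$ with $\lambda = \mu H(q)$ and $\mu = 1/(\L(q)-H(q)^2)$, and the minimum value equals $\mu$ itself. Assembling the pieces, $\v''(p) \leq 0$ holds whenever $\cs\,h''(\cs H(q))/h'(\cs H(q)) \leq 1/(\L(q)-H(q)^2)$ for every probability distribution $q$ that can arise as the spectrum of some $\omega(p)$, which is precisely \cref{eq:condition q}. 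Concavity of $f_\cs$ then follows from $\v(p) \geq p\v(1) + (1-p)\v(0)$. The main obstacle I anticipate is correctly handling the non-commuting contribution to $\eta''$; it is resolved cleanly by the sign of the divided differences of the logarithm, which reduces the problem to the above commutative optimization.
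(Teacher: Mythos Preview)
Your proof is correct and follows essentially the same route as the paper: compute $\v''$ via the chain rule, express the second derivative of the entropy as the Kubo--Mori quantity $M_\xi(\Delta)=\Tr\bigl(\Delta\,\cT_\xi(\Delta)\bigr)$, reduce to the case where $\Delta$ commutes with $\xi$, and then solve the resulting constrained minimization by Lagrange multipliers. The only notable difference is in the reduction step: the paper invokes monotonicity of the metric $M_\xi$ under the dephasing channel $\diag_\xi$, whereas you achieve the same inequality more elementarily by writing $M_\xi(\Delta)$ explicitly in the eigenbasis of $\xi$ and discarding the manifestly non-negative off-diagonal contributions (positivity of the logarithmic divided differences).
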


To prove this lemma we employ the following definitions and results from~\cite{KA}.
For operators $A, \Delta \in \cL(\cH)$, where $A > 0$ and $\Delta$ is Hermitian, the Fr\'echet derivative of the operator logarithm is given by the linear, completely positive map $\Delta \mapsto \cT_A(\Delta)$~\cite{KA}, where
\begin{align}
  \cT_A(\Delta)
  &:= \frac{d}{dt} \bigg\vert_{t=0} \log (A+t\Delta), \nonumber \\
  & = \int_0^\infty ds (A+sI)^{-1} \Delta (A+sI)^{-1}. \label{Frechet integral}
\end{align}
Here the second line follows from the integral representation of the operator logarithm,
\begin{equation}
  \log A = \int_{0}^\infty ds \of*{\frac{1}{1+s}I - (A+sI)^{-1}}
  \quad \text{for any } A > 0,
\end{equation}
and the fact that
\begin{equation}
  \frac{d}{dt} \of{A + t\Delta}^{-1} = - \of{A + t\Delta}^{-1} \Delta \of{A + t\Delta}^{-1}.
\end{equation}
When $A$ and $\Delta$ commute, the integral in \cref{Frechet integral} can be worked out and we get $\cT_A(\Delta) = \Delta A^{-1}$.

It is easy to check that the map $\cT_A(\Delta)$ is self-adjoint, i.e., for any $B \in \cL(\cH)$,
\begin{equation}\label{Ts-a}
  \Tr \of[\big]{B \cT_A(\Delta)} = \Tr \of[\big]{\cT_A(B)\Delta},
\end{equation}
and that
\begin{equation}\label{T-1}
  \cT_A(A) = I.
\end{equation}
This linear map induces a metric on the space of Hermitian matrices given by
\begin{equation}\label{metric}
  M_A(\Delta) := \Tr \of[\big]{\Delta \cT_A(\Delta)}.
\end{equation}
This metric is known to be monotone~\cite{KA}; that is, for any completely positive trace-preserving linear map $\Lambda$,
\begin{equation}\label{mono}
  M_{\Lambda(A)} \of[\big]{\Lambda(\Delta)} \le M_A(\Delta).
\end{equation}

Now we are ready to prove \cref{lem:concave-cond1}. Our proof will proceed in two steps: first we will reduce the problem from general quantum states to commuting ones, and then restate the concavity condition for commuting states in terms of a similar condition for probability distributions.

\begin{proof}[Proof of \cref{lem:concave-cond1}]
Let $\Delta := \rho - \sigma$ and $\xi := p \rho + (1-p) \sigma = \sigma + p \Delta$. Note that $\xi' := \frac{d}{dp} \xi = \Delta$ and $\xi'' = 0$. Recall from \cref{eq:v''} that concavity of $f_\cs$ is equivalent to $\v''(p) \leq 0$ where
\begin{equation}
  \v(p):= f_\cs(\xi) = h(\cs H(\xi)).
  \label{eq:v(p)}
\end{equation}
To compute $\v''(p)$, we will need to find the first two derivatives of $H(\xi) = - \Tr (\xi \log \xi)$ with respect to $p$. Noting that
\begin{equation}
  \frac{d}{dp} \log \xi = \cT_\xi(\xi')
  \label{eq:dlog}
\end{equation}
and using \cref{eq:dlog}, we find that the first derivative of $H(\xi)$ is
\begin{align}
  \smash[b]{\frac{d}{dp}} H(\xi)
  &= - \Tr\of[\big]{\xi' \log \xi} - \Tr\of[\big]{\xi \cT_\xi(\xi')} \nonumber \\
  &= - \Tr\of[\big]{\xi' \log \xi} - \Tr\of[\big]{\cT_\xi(\xi) \xi'} \nonumber \\
  &= - \Tr\of[\big]{\xi' \log \xi} - \Tr \xi' \nonumber \\
  &= - \Tr\of[\big]{\xi' \log \xi}. \label{eq:H1}
\end{align}
In the first line we used the Fr\'echet derivative of the logarithm as given in \cref{eq:dlog}, while the second line follows from the self-adjointness \eqref{Ts-a} of the map $\cT_\xi$. The last two lines follow from \cref{T-1} and the fact that $\Tr \xi' = \Tr \Delta = 0$. The second derivative is
\begin{equation}
  \smash[b]{\frac{d^2}{dp^2}} H(\xi)
  = - \Tr\of[\big]{\xi'' \log \xi} - \Tr\of[\big]{\xi' \cT_\xi(\xi')}
  = - M_\xi(\xi'),
  \label{eq:H2}
\end{equation}
where the first term vanishes since $\xi'' = 0$ while the second term produces $M_\xi(\xi')$ by \cref{metric}.

We are now ready to calculate the second derivative of $\v(p) = h(\cs H(\xi))$ introduced in \cref{eq:v(p)}. By the chain rule,
\begin{align}
  \v' (p) &= \cs h'\of[\big]{\cs H(\xi)} \frac{dH(\xi)}{dp}, \\
  \v''(p) &= \cs^2 h''\of[\big]{\cs H(\xi)} \sof*{\frac{dH(\xi)}{dp}}^2
          + \cs   h' \of[\big]{\cs H(\xi)} \frac{d^2 H(\xi)}{dp^2}.
\end{align}
Therefore, $\v''(p) \leq 0$ is equivalent to
\begin{equation}
  \cs h''\of[\big]{\cs H(\xi)} \sof*{\Tr\of{\xi' \log \xi}}^2
 \leq h' \of[\big]{\cs H(\xi)} M_\xi(\xi'),
\end{equation}
where we divided by $c > 0$ (the case $c=0$ is trivial) and substituted the derivatives of $H(\xi)$ from \cref{eq:H1,eq:H2}. Since we imposed the condition that $h$ is monotonously increasing, we can divide by $h'$ and get the condition
\begin{equation}
  \cs \,
  \frac{h''\of[\big]{\cs H(\xi)}}
       {h' \of[\big]{\cs H(\xi)}}
  \leq
  \frac{M_\xi(\Delta)}
       {\sof{\Tr(\Delta\log\xi)}^2}.
  \label{eq:KA1}
\end{equation}

By fixing the state $\xi$ and minimizing the right-hand side over all $\Delta$, we get a stronger inequality, which in particular implies \cref{eq:KA1}. Consider the dephasing channel $\Lambda := \diag_\xi$ which, when acting on an operator $\Delta$, sets all its off-diagonal elements equal to $0$ in any basis in which $\xi$ is diagonal (in particular, in its eigenbasis). Thus, $\diag_\xi(\xi) = \xi$ and
\begin{equation}\label{mono-diag}
  M_{\xi}\of[\big]{\diag_\xi(\Delta)} \le M_\xi(\Delta),
\end{equation}
by the monotonicity property \eqref{mono} of the metric $M_\xi(\Delta)$ under CPTP maps. Hence, on replacing $\Delta$ by $\diag_\xi(\Delta)$ on the right-hand side of \cref{eq:KA1}, the denominator remains the same but the numerator does not increase. Since $[\diag_\xi(\Delta), \xi] = 0$, to obtain the minimum value of the right-hand side of \cref{eq:KA1}, it therefore suffices to restrict to those $\Delta$ which commute with $\xi$.

Recall that $\cT_\xi(\Delta) = \Delta \xi^{-1}$ for commuting $\xi$ and $\Delta$, so
\begin{equation}
  M_\xi(\Delta)
  = \Tr\of[\big]{\Delta \cT_\xi(\Delta)}
  = \sum_{i=1}^d \delta_i^2/\xi_i
\end{equation}
where $\xi_i$ and $\delta_i$ for $i \in \set{1, \dotsc, d}$ are the diagonal elements of $\xi$ and $\Delta$ in the eigenbasis of $\xi$ (in fact, $\xi_i$ are the eigenvalues of $\xi$). We can now phrase the problem of minimizing the right-hand side of \cref{eq:KA1} as follows:
\begin{equation}
  \text{minimize} \quad
  \frac{\sum_{i=1}^d \delta_i^2/\xi_i}{\of{\sum_{i=1}^d \delta_i \log \xi_i}^2} \quad
  \text{subject to} \quad
  \sum_{i=1}^d \delta_i = 0,
  \label{eq:frac min}
\end{equation}
where the condition $\sum_{i=1}^d \delta_i = 0$ arises from the fact that $\Tr \Delta = 0$.

Since the objective function in \cref{eq:frac min} is invariant under scaling of all $\delta_i$ by the same scale factor, we can convert the minimization problem to the following one:
\begin{equation}
  \text{minimize}   \quad \sum_{i=1}^d \delta_i^2/\xi_i \quad
  \text{subject to} \quad \sum_{i=1}^d \delta_i = 0
  \;\text{ and } \sum_{i=1}^d \delta_i \log \xi_i = 1.
  \label{eq:min}
\end{equation}
Using the method of Lagrange multipliers, we form the Lagrangian
\begin{equation}
  \cL
  := \sum_{i=1}^d \delta_i^2/\xi_i
   - 2 \lambda \sum_{i=1}^d \delta_i
   - 2 \mu \of*{ \sum_{i=1}^d \delta_i \log \xi_i - 1 }.
\end{equation}
To find its stationary points, we require that $\partial\cL/\partial\delta_i = 0$ for all $i$. This implies
\begin{equation}
  \delta_i = \xi_i (\lambda + \mu \log \xi_i).
  \label{eq:deltas}
\end{equation}
To find the Lagrange multipliers $\lambda$ and $\mu$, we substitute the $\delta_i$ back into the constraints of the optimization problem \eqref{eq:min}. We get the following equations:
\begin{align}
  \lambda   - \mu H &= 0, &
 -\lambda H + \mu \L &= 1,
\end{align}
where $H := - \sum_{i=1}^d \xi_i \log \xi_i$ and $\L := \sum_{i=1}^d \xi_i (\log \xi_i)^2$. Their solution is
\begin{align}
  \mu &= \frac{1}{\L-H^2}, &
  \lambda &= \frac{H}{\L-H^2}.
  \label{eq:mu lambda}
\end{align}

Inserting \cref{eq:deltas,eq:mu lambda} back in the objective function of \cref{eq:min} yields
\begin{align}
    \sum_{i=1}^d \delta_i^2/\xi_i
 &= \sum_{i=1}^d \xi_i (\lambda + \mu \log \xi_i)^2 \nonumber \\
 &= \lambda^2 - 2 \lambda \mu H + \mu^2 \L \nonumber \\
 &= \frac{1}{\L-H^2}.
\end{align}
Thus, \cref{eq:KA1} is satisfied whenever
\begin{equation}
  \cs \, \frac{h''(\cs H)}{h'(\cs H)} \leq \frac{1}{\L-H^2}.
  \label{eq:condition}
\end{equation}
Note that $H = H(q)$ and $\L = \L(q)$ where $q := (q_1, \dots, q_d)$ with $q_i := \xi_i$ is a probability distribution. Thus, condition \eqref{eq:condition q} implies \cref{eq:condition} and hence the concavity of $f_\cs$.
\end{proof}

The quantity $L - H^2$ arising on the right-hand side of \cref{eq:condition} is known as the \emph{variance of the surprisal} $(- \log q_i)$~\cite{RW15,PPV10}:
\begin{equation}
  V(q)
  := \L(q) - H(q)^2
   = \sum_{i=1}^d q_i (-\log q_i)^2
   - \of[\bigg]{\sum_{i=1}^d q_i (-\log q_i)}^2.
\end{equation}
To find the optimal value of $\cs$ for which \cref{eq:condition} holds, we need to minimize its right-hand side over all attainable values of the quantity $\L - H^2$ for a fixed value of $H$. In other words, we require the maximum attainable value of $\L(q) - H(q)^2$ over all probability distributions $q$ over $d$ elements with a fixed value of the entropy $H(q) = H_0$ (in contrast, ref. \cite{RW15} evaluated the maximum value of $V(q)$ \emph{without} the constraint of $H(q)$ being fixed). We define
\begin{equation}
  \L_{\max}(H_0) :=
  \max \set[\Big]{\L(q) : H(q) = H_0, \, \sum_{i=1}^d q_i = 1, \, \text{and $q_i \geq 0$ for all $i$}}.
  \label{eq:Vmax def}
\end{equation}
To obtain this value and the corresponding optimal distribution $q$, we employ the following lemma.

\begin{lemma}\label{lem:abb}
The maximum of $\L(q) := \sum_{i=1}^d q_i (\log q_i)^2$ over all probability distributions $q = (q_1, \dotsc, q_d)$ with fixed Shannon entropy $H(q) = H_0 \in [0, \log d]$ is achieved by a distribution of the form
\begin{equation}
  q = \of{x,\dotsc,x,y}
  \text{ for some } 0 \leq x < y \text{ such that } (d-1) x + y = 1.
\end{equation}
If we let $r := d-1$, then the value of $\L(q)$ achieved by this distribution is
\begin{equation}
  \L_{\max}(H_0) = rx (1-rx) \of[\big]{\log x - \log(1-rx)}^2 + H_0^2.
\end{equation}
\end{lemma}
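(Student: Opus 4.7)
The plan is to cast the problem as a constrained optimization over the probability simplex and analyze the critical points through a combination of Lagrange multipliers and a comparison over discrete multiplicities. First, I would form the Lagrangian for maximizing $\L(q) = \sum_{i=1}^d q_i (\log q_i)^2$ subject to $H(q) = H_0$ and $\sum_i q_i = 1$. Setting the partial derivative with respect to $q_i$ to zero yields the quadratic equation
\begin{equation*}
(\log q_i)^2 + (2 + \mu) \log q_i + (\mu - \nu) = 0
\end{equation*}
in $\log q_i$, whose coefficients are independent of $i$. Consequently $\log q_i$, and hence $q_i$, takes at most two distinct values at any critical point, so the maximizer is a bi-valued distribution. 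Existence of the maximum follows from continuity of $\L$ on the compact feasible set, and the boundary case $q_i = 0$ is handled by continuity of $q \mapsto q(\log q)^2$ at the origin.

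Next I would parametrize bi-valued distributions by the multiplicity $k \in \{1, \dots, d-1\}$ of the smaller value $x$, with the larger value $y$ determined by $kx + (d-k)y = 1$. For fixed $k$, the map $x \mapsto H(q)$ is monotonic on $[0, 1/d]$ and its image is $[\log(d-k), \log d]$ (extending down to $H = 0$ only when $k = d-1$ with $x = 0$). Hence the entropy constraint fixes $x = x_k(H_0)$ uniquely, and it remains to compare the resulting values $\L_k(H_0) := \L(q_k)$ across admissible $k$ and show that the maximum is attained at $k = d-1$. Once this is established, substituting $r := d-1$ and $y = 1 - rx$ together with the identity $\L = V + H^2$, where $V(q) := \sum_i q_i (\log q_i)^2 - H(q)^2$ is the variance of the surprisal, yields the stated expression
\begin{equation*}
\L_{\max}(H_0) = rx(1-rx)\bigl(\log x - \log(1-rx)\bigr)^2 + H_0^2.
\end{equation*}

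The main obstacle is proving the monotonicity $\L_k(H_0) \le \L_{d-1}(H_0)$ for all admissible $k < d-1$. Applying the law of total variance to the two-group partition of $\{1, \dots, d\}$ gives the closed form
\begin{equation*}
V(q_k) = u(1-u)\left[\log \frac{k(1-u)}{(d-k)\,u}\right]^2
\end{equation*}
with $u := kx$, subject to the implicit constraint $h(u) + u \log k + (1-u) \log(d-k) = H_0$, where $h(u) := -u \log u - (1-u) \log(1-u)$ is the binary entropy. For $H_0 < \log 2$ only $k = d-1$ is admissible, so the claim is trivial; for $H_0 \ge \log 2$ the natural line of attack is a direct comparison of $V(q_{k+1})$ against $V(q_k)$, using the entropy constraint to relate $u_{k+1}$ and $u_k$ and exploiting that $\phi(q) := q(\log q)^2$ is convex on $(1/e, 1)$ and concave on $(0, 1/e)$. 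Heuristically, increasing $k$ concentrates more mass into the single large entry, pushing the smaller entries closer to zero and thereby widening the spread of $-\log q_i$ and with it $V$.
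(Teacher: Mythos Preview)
Your setup matches the paper's: the Lagrangian argument correctly forces any interior critical point to be bi-valued, the parametrization by the multiplicity $k$ of the smaller value is the same, and your derivation of the final formula via $\L = V + H^2$ is correct. The boundary handling (some $q_i = 0$) is adequate once you restrict to admissible $k$.

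The genuine gap is exactly where you flag it: the monotonicity $\L_k(H_0) \le \L_{d-1}(H_0)$. You propose a discrete comparison $V(q_{k+1})$ vs.\ $V(q_k)$ based on the convexity/concavity regimes of $q\mapsto q(\log q)^2$, but you do not carry it out, and the closing sentence is explicitly heuristic. This step is the heart of the lemma, and the convexity argument you sketch does not obviously close: the entropy constraint couples $u_k$ and $u_{k+1}$ in a way that is not controlled by convexity of $\phi$ alone, and the two regimes of $\phi$ interact nontrivially since the small entries may lie on either side of $1/e$.

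The paper resolves this differently. It drops the integrality of $k$, treats $k$ as a continuous parameter in $(0,K)$, and computes the total derivative of $\L(k,x)$ along the level curve $H(k,x)=H_0$ by implicit differentiation. This reduces the claim to the pointwise inequality
\[
\log\frac{1-kx}{K-k} - \log x \;\ge\; \frac{2(1-Kx)}{1+(K-2k)x},
\]
which follows from the elementary bound $\log w \ge 2(w-1)/(w+1)$ for $w\ge 1$ with $w=(1-kx)/((K-k)x)$. This gives $\frac{d}{dk}\L(k,x(k))\ge 0$, hence the integer maximizer is $k=K-1$. The paper then performs a second (also continuous) maximization over the effective support size $K\le d$ to rule out boundary maxima, obtaining $K=d$. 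Your proposal would need either to execute the discrete comparison rigorously or to adopt this continuous-interpolation idea to be complete.
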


\begin{proof}
For given $H_0 \in [0,\log d]$, we need to solve the following constrained optimization problem:
\begin{equation}
  \text{maximize}   \quad \sum_{i=1}^d q_i (\log q_i)^2 \quad
  \text{subject to} \quad \sum_{i=1}^d q_i = 1
  \;\text{ and } - \sum_{i=1}^d q_i \log q_i = H_0.
  \label{eq:max}
\end{equation}
Since the domain of the logarithm is $\R^+$, we do not have to explicitly impose the condition that $q_i \geq 0$ for all $1 \leq i \leq d$.

The maximum of a continuously differentiable function $f$ over a domain $D$ either occurs at a stationary point of $f$, or on the boundary of $D$. In the present case $D$ is the probability simplex, hence its boundary consists of probability vectors where some of the $q_i$ are zero. Due to the fact that both $-q_i\log q_i$ and $q_i (\log q_i)^2$ are zero for $q_i=0$, such points can be conveniently modeled by treating them as probability vectors in a lower-dimensional probability space. We can therefore safely assume that the sought-after maximum occurs at the relative interior of a $K$-dimensional probability simplex (with $K \leq d$), and at the very end of the calculation perform a further maximization over $K$. In particular, it will turn out that the global maximum occurs for $K=d$.

The aforementioned maximum can be found as a stationary point of the Lagrangian
\begin{equation}
  \cL
  := \sum_{i=1}^K q_i (\log q_i)^2
   + \lambda \of[\bigg]{\sum_{i=1}^K q_i - 1}
   - \mu \of[\bigg]{\sum_{i=1}^K q_i \log q_i + H_0}.
\end{equation}
Requiring that all derivatives $\partial \cL / \partial q_i$ be zero yields the equations
\begin{equation}
  (\log q_i)^2 + (2-\mu) \log q_i + \lambda - \mu = 0.
\end{equation}
As this is a fixed \emph{quadratic} function of $\log q_i$, and therefore may have at most two solutions, we infer that the stationary points of $\cL$ are those distributions $q$ whose elements are either all equal (and hence equal to $1/K$) or equal to two possible values. That is, up to permutations, the distribution $q$ can be uniquely represented as
\begin{equation}
  q_{k,x} := (\underbrace{x, \dotsc, x}_{k}, \underbrace{y, \dotsc, y}_{K-k})
  \label{eq:q}
\end{equation}
for some integer $k \in \set{1, \dotsc, K}$ and some probabilities $0 \leq x < y$ such that
\begin{equation}
  k x + (K-k) y = 1.
  \label{eq:Normalization}
\end{equation}
From this we get in addition that $x \leq 1/K < y$. For $k = K$, there is only one distribution of this form, namely, the uniform distribution $q_{d,x} = (1/K, \dotsc, 1/K)$. This distribution has $H(q_{K,x}) = \log K$ and $\L(q_{K,x}) = (\log K)^2$, which are independent of $x$, so there is nothing to optimize in this case.

\begin{figure}[t]
\centering
\includegraphics[width=12cm]{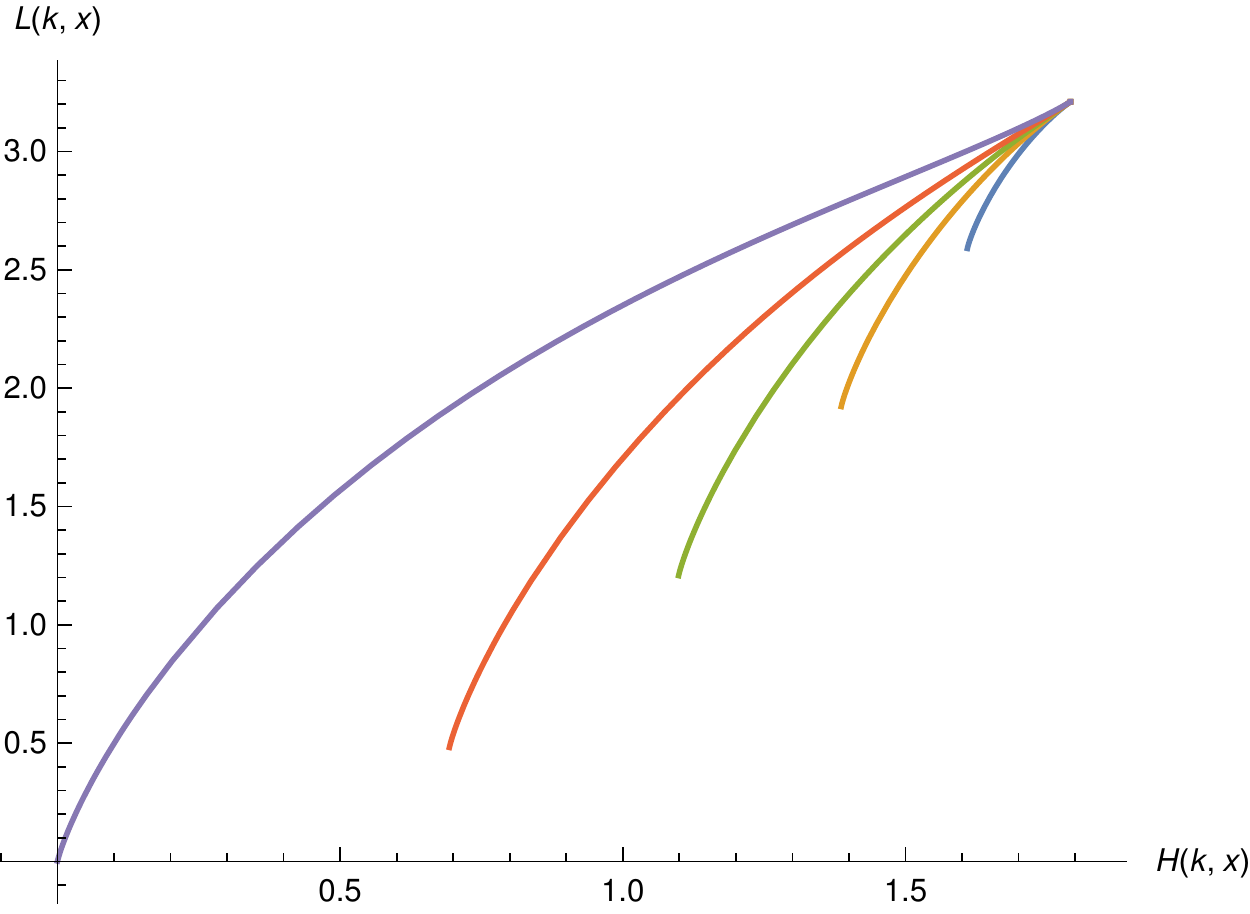}
\caption{The locus of the points $(H,\L)$ as $x$ varies over $[0,1/K]$, for the case $K=6$ and for each value of $k \in \set{1,\dotsc,5}$, with $k$ increasing towards the left. These loci are curves with lower end point $H=H(k,0)=\log (K-k)$, $\L=\L(k,0)=(\log(K-k))^2$ and upper end point $H=H(k,1/K)=\log K$, $\L=\L(k,1/K)=(\log K)^2$. The value $k=0$ yields a single point, just as the $k=K$ case does, coinciding with the upper end point of all other $(H,\L)$ curves.\label{fig:KA1}}
\end{figure}

From now on we assume that $k \neq K$ and thus $H_0 < \log K$. Then $y := (1 - k x) / (K - k)$ from the normalization constraint \eqref{eq:Normalization}, so we can compute
\begin{align}
  H(k,x) &:=  H(q_{k,x}) = -k x  \log x    - (K-k) y  \log y,    \label{eq:Hkx} \\
 \L(k,x) &:= \L(q_{k,x}) =  k x (\log x)^2 + (K-k) y (\log y)^2. \label{eq:Vkx}
\end{align}
To obtain the global maximum of $\L(k,x)$, a further optimization over $k \in \set{1, \dotsc, K-1}$ and $x \in [0, 1/K)$ is required. The numerical calculations presented in the diagram in \cref{fig:KA1} suggest that $k=K-1$ yields the maximal value of $\L$. To prove that this is actually true we will temporarily remove the restriction that $k$ be an integer and consider the entire range $k\in(0,K)$. Our analysis will show that keeping $H(k,x)$ fixed, $\L(k,x)$ increases with $k$.

\newcommand{\kxk}{\of[\big]{k,x(k)}}

To keep $H$ fixed as $k$ changes, $x$ will have to change as well. For given $H_0 \in [0, \log K)$, let $x(k)$ be the function of $k$ implicitly given by $H(k,x(k)) = H_0$. We would like to know how $\L(k,x(k))$ changes as a function of $k$.
Taking the total derivative with respect to $k$ gives
\begin{align}
      \frac{d}{dk}H\kxk
   &= \frac{\partial}{\partial k}H\kxk
    + \frac{\partial}{\partial x}H\kxk \, x'(k) = 0, \\
      \frac{d}{dk}\L\kxk
   &= \frac{\partial}{\partial k}\L\kxk
    + \frac{\partial}{\partial x}\L\kxk \, x'(k).
\end{align}
Solving the first equation for $x'(k)$ and substituting the solution in the second equation gives
\begin{align}
  \frac{d}{dk}\L\kxk
  &= \frac{\partial}{\partial k}\L\kxk
   - \frac{\partial}{\partial x}\L\kxk \left.
     \frac{\partial}{\partial k}H\kxk \middle/
     \frac{\partial}{\partial x}H\kxk \right. \nonumber \\
  &= \frac{1}{K-k}
     \sof*{ \of[\Big]{1+(K-2k)x} \of[\Big]{\log \frac{1-kx}{K-k} - \log x} - 2(1-Kx)}, \label{632}
\end{align}
where the second line follows by substituting the partial derivatives of $H(k,x)$ and $\L(k,x)$ defined in \cref{eq:Hkx,eq:Vkx}.

Note that $z \geq \tanh z = (e^{2z}-1)/(e^{2z}+1)$ for any $z \geq 0$. By choosing $z := (\log w) / 2$ we get $\log w \geq 2(w-1)/(w+1)$ for $w \geq 1$. Next, since $x \leq 1/K$, we can take $w := (1-kx)/(Kx-kx) \geq 1$ which gives
\begin{equation}
  \log \frac{1 - k x}{K - k} - \log x
  \geq 2 \left.
    \of*{\frac{1-kx}{(K-k)x}-1} \middle/
    \of*{\frac{1-kx}{(K-k)x}+1} \right.
  = 2 \, \frac{1-Kx}{1+(K-2k)x}.
\end{equation}
Inserting this in \cref{632} and noting that $1+(K-2k)x\ge0$ for $x\le1/K$, we conclude that $d\L(k,x(k)) / dk \geq 0$, so $\L(k,x(k))$ is increasing as a function of $k$ just as we intended to show.

Reverting back to integer values of $k$, we find that, for a fixed value of $H(k,x)$, the value of $\L(k,x)$ is maximized when $k$ is the largest integer in the open interval $(0,K)$, namely $K-1$. Then $\L_{\max}(H_0)$, the maximum value of $\L(k,x)$ subject to $H(k,x) = H_0$, see \cref{eq:Vmax def}, is given by $\L(K-1,x)$ where $x$ is such that $H(K-1,x) = H_0$.

Finally, we have to perform a further maximization over $K$, for $K\le d$. In a similar way as before $x$ becomes a function of $K$. We now show that the maximum of $\L(K-1,x(K))$ under the constraints $H(K-1,x(K))=H_0$ and $K\le d$ occurs for $K=d$. Solving the equation $0 = \frac{d}{dK}H(K-1,x(K))$ for $x'(K)$ and substituting the solution back into $\frac{d}{dK}\L(K-1,x(K))$ shows after a fair bit of algebra that
\begin{equation}
  \frac{d}{dK}\L(K-1,x(K)) = x(K) \sof[\Big]{2+\log\of[\big]{1-K+1/x(K)}},
\end{equation}
which is clearly non-negative for $0 \le x(K) \le 1/K$, hence $\L(K-1,x(K))$ increases with $K$. We conclude that the overall maximum occurs for $K=d$, as we set out to prove.

The last statement of the lemma is now easily shown. From \cref{eq:Vkx} we infer that
\begin{equation}
  \L_{\max}(H_0) - H_0^2
  = \L(d-1,x) - H(d-1,x)^2
  = rx (1-rx) \of[\big]{\log x - \log(1-rx)}^2,
\end{equation}
where $r := d-1$ and $x \in [0, 1/d]$ satisfies $H(d-1,x) = H_0$.
\end{proof}

For $r = d-1$ and any $x \in [0,1/d]$, if $q_{r,x}$ is the probability distribution defined in \cref{eq:q}, we denote its Shannon entropy and the information variance by
\begin{align}
  s_r(x) &:= H(q_{r,x}) = - rx \log x - (1-rx) \log(1-rx) \label{s-r-x}, \\
  w_r(x) &:= V(q_{r,x}) = \L(q_{r,x}) - H(q_{r,x})^2 = rx (1-rx) \of[\big]{\log x - \log(1-rx)}^2. \label{w-r-x}
\end{align}
In terms of these quantities, the condition in \cref{lem:concave-cond1}, under which a given function of the von~Neumann entropy is concave on the set of qudit states, is expressed by the following theorem.

\begin{theorem}\label{thm:concave-cond2}
Let $h: \R \to \R$ be a twice-differentiable, monotonously increasing function. Then the function $f_{\cs}(\rho) := h(\cs H(\rho))$ with $\cs \geq 0$ and $\rho \in \D{d}$ is concave on $\D{d}$ if
\begin{equation}
  \cs \, \frac{h''(\cs s_r(x))}{h'(\cs s_r(x))} \leq \frac{1}{w_r(x)},
  \label{condition x}
\end{equation}
for all $0 \leq x \leq 1/d$, where $r:=d-1$ and functions $s_r(x)$ and $w_r(x)$ are defined in \cref{s-r-x,w-r-x}.
\end{theorem}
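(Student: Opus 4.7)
The strategy is to chain the two preceding lemmas. \cref{lem:concave-cond1} already reduces concavity of $f_c$ to a scalar inequality that must hold over all probability distributions on $d$ letters, and \cref{lem:abb} identifies, for every fixed value of the Shannon entropy, the distribution that makes that scalar inequality tightest. The present theorem then amounts to reparametrising the worst-case distribution by its ``plateau'' probability $x$.

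Concretely, I would start from \cref{lem:concave-cond1} and note that the left-hand side of \eqref{eq:condition q},
\begin{equation*}
  c \, h''(cH(q))/h'(cH(q)),
\end{equation*}
depends on $q$ only through the scalar $H_0 := H(q)$. Hence, stratifying the set of distributions by the value $H_0$, the family of inequalities over all $q$ with $H(q) = H_0$ is equivalent to the single inequality
\begin{equation*}
  c\,\frac{h''(cH_0)}{h'(cH_0)} \le \frac{1}{\L_{\max}(H_0) - H_0^2},
\end{equation*}
where $\L_{\max}(H_0)$ is the constrained maximum introduced in \cref{eq:Vmax def}. So it suffices to control $\L_{\max}(H_0) - H_0^2$ uniformly in $H_0 \in [0, \log d]$.

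This is exactly what \cref{lem:abb} provides: the maximiser is of the form $q_{r,x} = (x,\dotsc,x,1-rx)$ with $r = d-1$, and the denominator evaluates to $w_r(x)$ precisely when $H_0 = s_r(x)$. A brief monotonicity check — $s_r'(x) = r \log\of[\big]{(1-rx)/x} \ge 0$ on $[0, 1/d]$, with $s_r(0) = 0$ and $s_r(1/d) = \log d$ — shows that $s_r$ is a continuous bijection from $[0, 1/d]$ onto $[0, \log d]$. Substituting $H_0 = s_r(x)$ into the preceding display turns it into exactly \eqref{condition x}, so \cref{lem:concave-cond1} then yields concavity of $f_c$ on $\D{d}$. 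The only step that requires any care is verifying that $s_r$ really does sweep the full range of admissible entropies, so that the reparametrisation introduces no loss; beyond this mild monotonicity argument, everything is direct substitution.
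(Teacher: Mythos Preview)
Your proposal is correct and matches the paper's approach exactly: the paper presents \cref{thm:concave-cond2} as the immediate consequence of chaining \cref{lem:concave-cond1} with \cref{lem:abb}, without writing out a separate proof. Your explicit verification that $s_r$ bijects $[0,1/d]$ onto $[0,\log d]$ is a helpful addition that the paper leaves implicit.
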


\subsection{Concavity of Entropy Power}

In this section we use \cref{thm:concave-cond2} to establish the first item of \cref{thm:Concavity}, namely, that the entropy power
$\EP_c(\rho) = e^{\cs H(\rho)}$ of a state $\rho \in \D{d}$ is concave for
$0 \leq \cs \leq 1/(\log d)^2.$

\begin{proof}[Proof of \cref{thm:Concavity} (concavity of $\EP_\cs$)]
In this case we have $h(x)=\exp(x)$, so the condition \eqref{condition x} just translates to
\begin{equation}
  \cs\le 1/w_{d-1}(x), \quad 0\le x\le 1/d.
\end{equation}
Therefore,
\begin{equation}\label{eq:cmax}
  \cs \le \of[\Big]{ \max_{0\le x\le 1/d} w_{d-1}(x) }^{-1} =: \cs_{\max}.
\end{equation}
From the expression of $w_r$ follows a simple lower bound on the largest allowed value of $\cs$.
Putting $y=rx$, with $0\le y\le (d-1)/d<1$,
\begin{align}
  w_r(x)
 &= y(1-y) \of[\big]{-\log r+\log y-\log(1-y)}^2 \nonumber \\
 &= A (\log r)^2 + B \log r + C,
\end{align}
where the coefficients $A,B,C$ of this quadratic polynomial in $\log r$ are bounded above as follows:
$A := y(1-y)\le 1/4$,
$B := -2y(1-y) \of[\big]{\log y-\log(1-y)} \le 1/2$, and
$C := y(1-y) \of[\big]{\log y-\log(1-y)}^2 \le 1/2$.
Hence,
\begin{equation}
  w_r(x)\le (1+\log r)/2+(\log r)^2/4 \quad \text{with } r=d-1,
\end{equation}
and we obtain
\begin{equation}\label{cmax bound}
  \cs_{\max} \ge 1 / \sof[\big]{\of[\big]{1+\log (d-1)}/2 + \of[\big]{\log(d-1)}^2/4}.
\end{equation}
This bound becomes asymptotically exact in the limit of large $d$. Note that the right-hand side of \cref{cmax bound} is larger than $ 1/(\log d)^2$ for $d \geq 3$. For $d = 2$, the right-hand side of \cref{cmax bound} is equal to 2 which is not larger than $1/(\log 2)^2 \approx 2.0814$. However, for this case one can numerically evaluate the expression \eqref{eq:cmax} for $\cs_{\max}$ to obtain the value $2.2767$, which is indeed greater than $1/(\log 2)^2$.
\end{proof}

From this we can also infer that for any probability distribution $p$ over $d$ elements, the function $E(p) := e^{c H(p)}$ is concave for $0 \leq c \leq 1/(\log d)^2$.

\begin{remark}
The fact that for $h(x) := \exp(x)$ the inequality~\eqref{eq:condition q} of \cref{lem:concave-cond1} holds for any value of $c$ in the range $0 \leq c \leq 1/(\log d)^2$ can also be proved using Theorem~8 and Lemma~15 of~\cite{RW15}.
\end{remark}

\subsection{Concavity of Entropy Photon Number}

In this section we use \cref{thm:concave-cond2} to establish the second item of \cref{thm:Concavity}, namely, that the entropy photon number $N_\cs(\rho)$ of a qudit, defined by \cref{def:EPn}, is concave for $0 \leq \cs \leq 1/(d-1)$.

\begin{proof}[Proof of \cref{thm:Concavity} (concavity of $N_\cs$)]
In this case the calculations are more complicated because $h$ is not given directly but as the inverse of a function: $h=g^{-1}$, where
\begin{equation}
  g(x)=-x\log(x)+(1+x)\log(1+x).
\end{equation}
The derivatives of $h$ are given by
\begin{align}
  h'(x) &= \frac{1}{g'(h(x))} = \frac{1}{\log(1+1/h(x))}, \\
  h''(x) &= \frac{1}{\of[\big]{h(x)+h^2(x)} \sof[\big]{\log(1+1/h(x))}^3}.
\end{align}
Defining the function
\begin{equation}
  k(x) = x(1+x) \of[\big]{\log(x)-\log(1+x)}^2,
\end{equation}
we have $h'(x)/h''(x) = k(h(x))$. The function $k$ is monotonously increasing, concave, and ranges from 0 to 1. The condition on $\cs$ becomes
\begin{equation}
  g^{-1}(\cs s_r(x)) \ge k^{-1}(\cs w_r(x)).
\end{equation}
If we define the variables $y$ and $z$ according to
\begin{equation}
  g(y)=\cs s_r(x), \qquad k(z)=\cs w_r(x),
\end{equation}
the condition is $y\ge z$. If we now exploit the monotonicity of $k$, this condition is equivalent to $k(y)\ge k(z)=\cs w_r(x)$. We therefore require that
\begin{equation}
  g(y)=\cs s_r(x) \,\,\mbox{ implies } \,\, k(y)\ge \cs w_r(x).\label{eq:KAcondphot}
\end{equation}
We will show that this holds for $\cs\le 1/a=1/(d-1)$. In \cref{fig:KA2} we depict the graph of $k(y)$ versus $g(y)$. The graph seems to indicate that the resulting curve is concave and monotonously increasing; that this is actually true follows from the easily checked fact that the function $k'/g'=(1+2x)\of[\big]{\log(1+x)-\log x}-2$, representing the slope of the curve, is positive and decreasing. The condition \eqref{eq:KAcondphot} amounts to the statement that any point $(\cs s_r(x),\cs w_r(x))$ lies in the area below this curve. Hence if the condition is satisfied for a certain value of $\cs$, then it is also satisfied for any smaller positive value of $\cs$. Therefore, we only need to prove \cref{eq:KAcondphot} for $\cs=1/(d-1)$.

\begin{figure}[t]
\centering
\includegraphics[width=8cm]{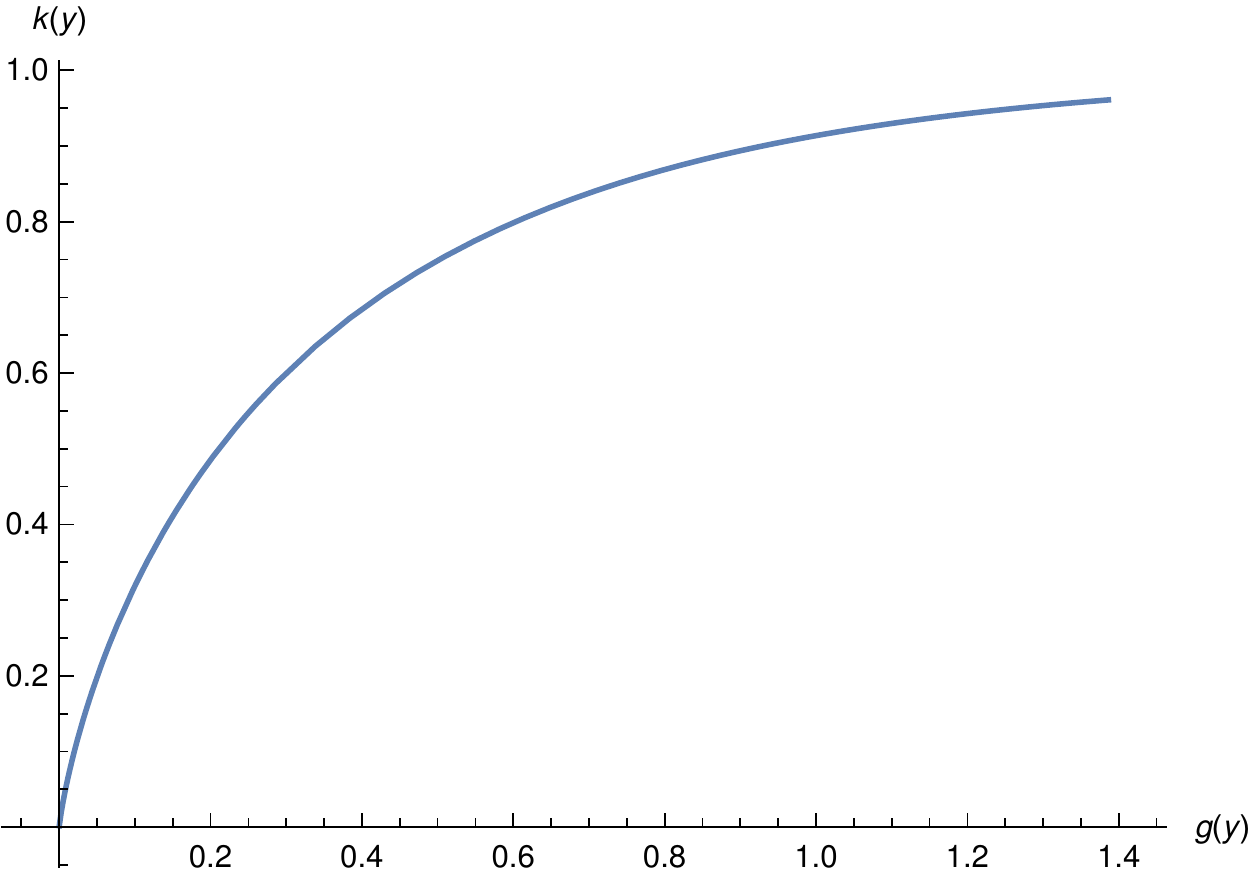}
\caption{Parametric plot of $k(y)$ versus $g(y)$.\label{fig:KA2}}
\end{figure}

\newcommand{\dom}{\mathcal{S}}

The formal similarities between $g$ and $s_r$ and between $k$ and $w_r$ let us define two interpolating functions $g_1(x,b)$ and $k_1(x,b)$ as a function of the original $x$ and an interpolation parameter $b$:
\begin{align}
  g_1(x,b) &= -x\log x + (1+bx)\frac{\log(1+bx)}{b}, \\
  k_1(x,b) &= x(1+bx)\of[\big]{\log x-\log(1+bx)}^2.
\end{align}
Let $\dom$ indicate the domain of $g_1$ and $k_1$, which is $b\in [1-d,1]$ and $x\in[0,1/d]$, as before. To ensure continuity of $g_1$ at $b=0$, we define $g_1(x,0)$ to be its limit value $\of{-x\log x+x}$. Hence, we have the correspondences
\begin{align}
  s_r(x)/(d-1) &= g_1(x,1-d),&\quad& g(x) = g_1(x,1),\\
  w_r(x)/(d-1) &= k_1(x,1-d),&\quad& k(x) = k_1(x,1).
\end{align}
The condition \eqref{eq:KAcondphot} is therefore satisfied if a continuous path $x(b)$ exists (from $x(1-d)=x$ to $x(1)=y$) such that $g_1(x(b),b)$ remains constant and $k_1(x(b),b)$ increases with $b$. As in the proof of \cref{lem:abb} this requires the positivity of
\begin{align}
  \frac{d}{db} k_1(x(b),b)
  &=\frac{\partial}{\partial b}k_1(x(b),b)
  - \frac{\partial}{\partial x}k_1(x(b),b) \left. \frac{\partial}{\partial b}g_1(x(b),b) \middle/ \frac{\partial}{\partial x}g_1(x(b),b) \right. \nonumber \\
  &= \frac{1}{b^2}\Bigl[
     bx \of[\big]{2+\log x+bx(\log x)^2} \nonumber \\
     &\quad +(1+bx)^2 \of[\big]{\log(1+bx)}^2 \nonumber \\
     &\quad -\of[\Big]{2+bx+(1+2bx+2b^2x^2)\log x} \log(1+bx)
     \Bigr].
\end{align}
Let us introduce the variable $u=1+bx$. In $\dom$ we have $(1-b)x\le 1$ so that $x\le u$; furthermore, $b\le 1$ and $x\le 1/d$, so that $u\le 1+1/d$. The second factor can now be written more succinctly as
\begin{align}
  \MoveEqLeft 
  (u-1)\left(2+\log x +(u-1)(\log x)^2\right)+u^2(\log u)^2
  -(1+u+(1-2u+2u^2)\log x)\log u\nonumber\\
  ={}& (u-1)^2\log x(\log x-\log u) + u^2(\log u)^2 \nonumber \\
   &+ 2(u-1)-(u+1)\log u-(1-u+u^2\log u)\log x.
\end{align}
The first two terms are clearly non-negative. The factor $1-u+u^2\log u$ is non-negative too, as can be seen from the inequality $1-\exp(-v)\le v\le v\exp(v)$ applied to $v=\log u$. Furthermore, $\log x\le \log(1/d)\le \log(1/2)\le -1/2$, so that the last term is bounded below by $(1-u+u^2\log u)/2$. It is therefore left to show that $2(u-1)-(u+1)\log u+(1-u+u^2\log u)/2$ is non-negative.

For $0<u\le 1$ we can exploit the inequality $\log u\le 2(u-1)/(u+1)$, so that we obtain $2(u-1)-(u+1)\log u\ge0$. The remaining term is non-negative too, as we have just showed.

For $1\le u\le 1+1/d$ we exploit instead the inequality $\log u\le u-1$. Then based on the fact that in this range $(u-1)^2-3<0$
\begin{align}
  \MoveEqLeft 2(u-1)-(u+1)\log u+(1-u+u^2\log u)/2 \nonumber \\
  &=   \frac{1}{2} \of[\Big]{3(u-1)+\of[\big]{(u-1)^2-3} \log u} \nonumber \\
  &\ge \frac{1}{2} \of[\Big]{3(u-1)+\of[\big]{(u-1)^2-3}(u-1)} \nonumber \\
  &=   (u-1)^3/2\ge0.
\end{align}

This shows that $k_1(x(b),b)$ indeed increases with $b$, whence condition \eqref{eq:KAcondphot} holds for $\cs=1/(d-1)$ and, by a previous argument, for $\cs\le1/(d-1)$. In other words, we have shown that the function $g^{-1}(\cs H(\rho))$ is concave for $0<\cs\le 1/(d-1)$. As this includes the value $\cs=1/d$,
the photon number is concave.
\end{proof}

\section{Bounds on minimum output entropy and Holevo capacity} \label{sec:Application}

As an application of our results we now consider the class of quantum channels $\E_{\a,\sigma} : \D{d} \to \D{d}$ obtained from the partial swap channel $\E_\a$ from \cref{eq:E} by fixing the second input state $\sigma$ (see \cref{fig:Channel}). Such channels are parameterized by a variable $\a \in [0,1]$ \emph{and} a quantum state $\sigma \in \D{d}$, and act as follows:
\begin{equation}
  \E_{\a,\sigma}(\rho) := \rho \boxplus_\a \sigma.
  \label{eq:Eas}
\end{equation}

\begin{figure}[t]
\centering

\begin{tikzpicture}[semithick]

  \def\zx{1.4}
  \def\zy{1.4}

  \def\h{0.3*\zy}

  \node[draw, fill = lightgray, minimum height = \zy*1.3cm, minimum width = \zx*1.7cm] at (-0.3*\zx,0) {};

  \draw (0,0) -- (1.3*\zx,0);
  \node at (2.5*\zx,0) {$\E_{\a,\sigma}(\rho) := \rho \boxplus_\a \sigma$};

  \draw (-1.8*\zx, \h) -- (0, \h);
  \draw (-0.7*\zx,-\h) -- (0,-\h);
  \node at (-2.0*\zx, \h) {$\rho$};
  \node at (-0.9*\zx,-\h) {$\sigma$};

  \node[draw, fill = white, minimum height = \zy*1cm, minimum width = \zx*0.8cm] at (0,0) {$\E_\a$};

  \node at (-0.2*\zx,-1*\zy) {$\E_{a,\sigma}$};

\end{tikzpicture}
\caption{A schematic representation of the channel $\E_{\a,\sigma}$ defined in \cref{eq:Eas}.\label{fig:Channel}}
\end{figure}
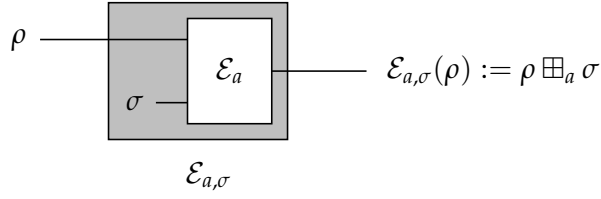

For example, for the choice $\sigma = I/d$ (the completely mixed state) the channel $\E_{\a,\sigma}$ is just the quantum depolarizing channel with parameter $\a$. If $\sigma = \delta \proj{0} + (1-\delta) \proj{1} \in \D{2}$ for some $\delta \in [0,1]$, then $\E_{\a,\sigma}$ is a qubit channel whose output density matrix is
\begin{equation}
  \mx{
    \a r_{00} + (1-\a) \delta & r_{01} (\a - i \sqrt{\a(1-\a)} (1-2\delta)) \\
    r_{01} (\a + i \sqrt{\a(1-\a)} (1-2\delta)) & \a r_{11} + (1-\a) (1-\delta)
  }
\end{equation}
for any input qubit state $\rho := \sum_{i,j=0}^1 r_{ij} \ket{i}\bra{j}$.

An important characteristic quantity for any quantum channel $\E$ is its \emph{minimum output entropy}, which is defined as
\begin{equation}\label{moe}
  H_{\min}\of{\E} := \min_\rho H\of{\E(\rho)}.
\end{equation}
Lower bounds on this quantity for the class of channels $\E_{\a,\sigma}$ can be obtained by using our EPIs and EPnI. In fact, the inequalities of \cref{thm:EPIs} give various lower bounds on the output entropy of the channel $\E_{\a,\sigma}$ (i.e. the entropy of any output state) in terms of the entropy $H(\rho)$ of an input state $\rho$:
\begin{align}
  H\of{\E_{\a,\sigma}(\rho)}
    &\geq \a H(\rho) + (1-\a) H(\sigma), \label{use} \\
  H\of{\E_{\a,\sigma}(\rho)}
    &\geq \frac{1}{c} \log \sof[\Big]{\a\exp(cH(\rho)) + (1-\a) \exp(cH(\sigma))},
    && \text{with } c = 1/(\log d)^2, \label{use2} \\
  H\of{\E_{\a,\sigma}(\rho)}
    &\geq \frac{1}{c} g \sof[\Big]{\a g^{-1}(cH(\rho)) + (1-\a) g^{-1}(cH(\sigma))},
    && \text{with } c = 1/(d-1). \label{use3}
\end{align}
Since the above bounds are of the form  $H\of{\E_{\a,\sigma}(\rho)} \geq G\of{H(\rho)}$, for some function $G$, we have
\begin{align}
  H_{\min}\of{\E_{\a,\sigma}} &\geq \min_\rho G\of{H(\rho)}\nonumber\\
  &= \min_{0\leq H_0 \leq \log d} G(H_0).
  \label{eq:G}
\end{align}

In \cref{fig:KA3} we have plotted the bounds $G(H_0)$ for two illustrative cases, the three curves corresponding to the three choices of the function $G$ as given by the right-hand sides of \cref{use,use2,use3}. For the qubit ($d=2$) case we actually have a tight lower bound
\begin{equation}
  H\of{\E_{\a,\sigma}(\rho)}
  \geq \f \sof[\Big]{ \a \f^{-1}\of[\big]{H(\rho)} + (1-\a) \f^{-1}\of[\big]{H(\sigma)} }
  \label{eq:Optimal d=2}
\end{equation}
where $\f(r)$ is the entropy of a qubit state whose Bloch vector has length $r$, see \cref{eq:H and f} in \cref{apx:EPI for qubits}. This bound follows from \cref{eq:Tight} in \cref{apx:EPI for qubits} and is also shown in \cref{fig:KA3}.

\begin{figure}[ht]
\centering
\includegraphics[width=8cm]{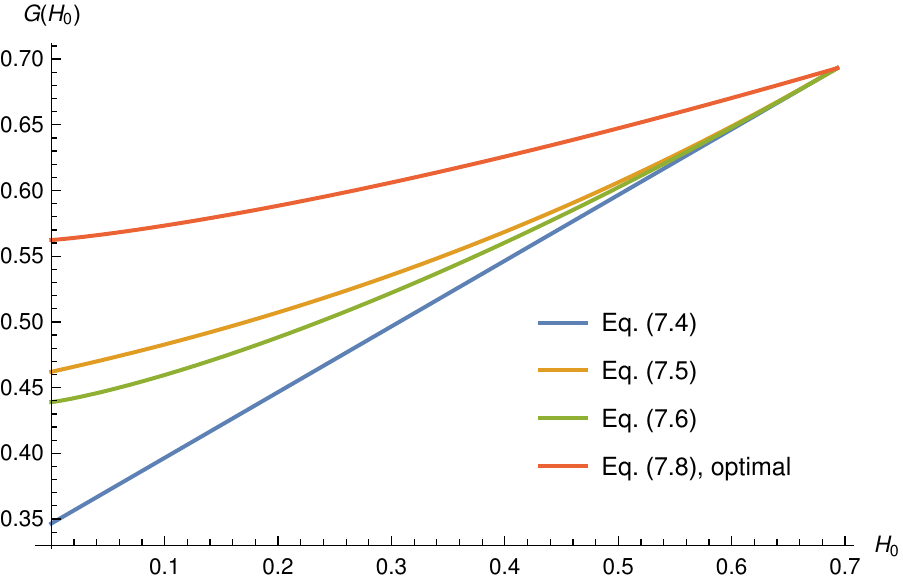}\quad
\includegraphics[width=8cm]{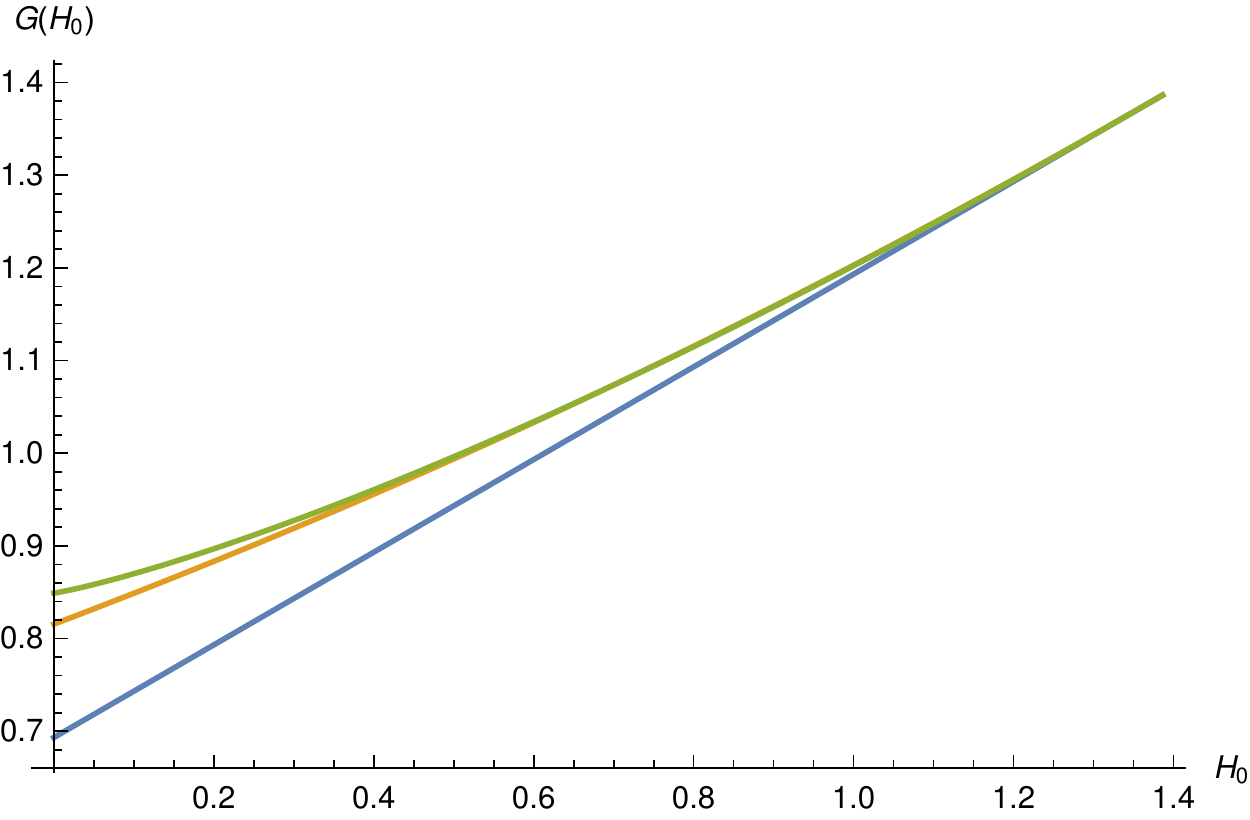}
\caption{Plots of bounds $G$ from \cref{eq:G} for the channel $\E_{1/2,\sigma}(\rho)$, where $\sigma$ is the maximally mixed state $\sigma=I/d$ in dimensions $d=2$ (left panel) and $d=4$ (right panel). The blue curves represent the bound \eqref{use} obtained from \cref{EPI H}, the orange curves represent the bound \eqref{use2} obtained from the entropy power inequality \eqref{EPI exp}, and the green curves represent the bound \eqref{use3} obtained from the entropy photon number inequality \eqref{EPnI-d}. For $d=2$, the optimal bound \eqref{eq:Optimal d=2} is given by the pink curve in the left panel. While neither of the bounds in~\cref{use,use2,use3} is optimal for this channel, the numerics suggest that the entropy photon number inequality is the best out of the three when $d \geq 4$. For $d = 2$, however, the entropy power inequality \eqref{use2} yields a better bound.\label{fig:KA3}}
\end{figure}

These bounds imply lower bounds on the minimum output entropy $H_{\min}\of{\E_{a,\sigma}}$, which in turn allow us to obtain upper bounds on the \emph{product-state classical capacity} of $\E_{a,\sigma}$. The latter is the capacity evaluated in the limit of asymptotically many independent uses of the channel, under the constraint that the inputs to multiple uses of the channel are necessarily product states. The Holevo-Schumacher-Westmoreland (HSW)~\cite{Holevo, SW97} theorem establishes that the product-state capacity of a memoryless quantum channel $\E$ is given by its Holevo capacity $\chi(\E)$:
\begin{equation}
  \chi(\E):= \max_{\{p_i, \rho_i\}}
  \set[\bigg]{H \of[\Big]{\sum_i p_i \E(\rho_i)} - \sum_i p_i H\of[\big]{\E(\rho_i) }},
\end{equation}
where the maximum is taken over all ensembles $\{p_i, \rho_i\}$ of possible input states $\rho_i$ occurring with probabilities $p_i$. Using the above expression, and the fact that $H(\omega) \le \log d$ for any $\omega \in \D{d}$, we obtain the following simple bound:
\begin{equation}
  \chi(\E) \le \log d - \min_{\rho} H\of[\big]{\E(\rho)},
\end{equation}
where the minimum is taken over all possible inputs to the channel. Applying this bound to the channel $\E_{\a,\sigma}$ for any $\a \in [0,1]$ and $\sigma \in \D{d}$ and using \cref{use} we infer that
\begin{align}
  \chi(\E_{\a,\sigma})
  &\le \log d - \a \min_\rho H(\rho) - (1-\a) H(\sigma) \nonumber \\
  &= \log d - (1-\a) H(\sigma).
\end{align}
For the case of the qubit channel introduced above, we thus obtain the bound
\begin{equation}
\chi(\E_{\a,\sigma}) \le \log 2 - (1-a)h(\delta),
\end{equation}
where $h(\delta):= - \delta \log \delta  - (1-\delta) \log (1-\delta)$ is the binary entropy. Even sharper bounds are possible by exploiting \cref{use2,use3}.

\section{Summary and open questions}

In this paper we establish a class of entropy power inequalities (EPIs) for $d$-level quantum systems or qudits. The underlying addition rule for which these inequalities hold, is given by a quantum channel acting on the product state $\rho \otimes \sigma$ of two qudits and yielding the state of a single qudit as output. We refer to this channel as a \emph{partial swap channel} since its output interpolates between the states $\rho$ and $\sigma$ as the parameter $a$ on which it depends is changed from $1$ to $0$. We establish EPIs not only for the von~Neumann entropy and the entropy power, but also for a large class of functions, which include the R\'enyi entropies and the subentropy. Moreover, for the subclass of partial swap channels for which one of the qudit states in the input is fixed, our EPI for the von~Neumann entropy yields an upper bound on the Holevo capacity.

We would like to emphasize that the method that we employ to prove our EPIs is novel, in the sense that it does not mimic the proofs of the EPIs in the continuous-variable classical and quantum settings. Instead it relies solely on spectral majorization and concavity of certain functions.

\subsection{Open questions}

Our results lead to many interesting open questions; here we briefly mention some of them. For example, can a conditional version of the EPI (see~\cite{Koenig}) be proved for qudits? Can an optimal bound similar to \cref{eq:Optimal d=2} be found also for $d > 2$? Is it possible to generalize our quantum addition rule~\eqref{eq:Box} for combining more than two states? Such a generalization has recently been obtained for three states~\cite{Ozols15}, though the problem for four or more states is not yet fully resolved. More importantly, proving analogues of our EPI for three or more states (similar to the multi-input EPI of \cite{DMLG14}) remains an interesting open question. Finally, is the partial swap channel that we define the unique channel resulting in an interpolation between the input states and yielding a non-trivial EPI (i.e., one that is not simply a statement of concavity)? According to~\cite{Ozols15}, it is unique (up to the sign of $i$) in a certain class of channels.

In \cref{sec:Application}, we mentioned a simple application of our EPI to quantum Shannon theory. Considering the significance of the classical EPI in information theory and statistics, we expect that our EPIs will also find further applications.

Finally, it would be worth exploring whether our proof of the qudit analogue of the entropy photon number inequality can be generalized to establish the EPnI for the bosonic case (which is known to be an important open problem).

\section*{Acknowledgements}

We would like to thank Jianxin Chen, Robert K\"onig and Will Matthews for useful discussions. We are grateful to David Reeb for correcting a typo in the previous version of our paper and for pointing us to the paper \cite{RW15} where an optimization problem similar to the one we consider in \cref{sec:Concavity} was studied. We would also like to thank an anonymous referee for helpful suggestions that improved our paper and for pointing out the optimality of the bound in \cref{eq:Optimal d=2}. KA acknowledges support by an Odysseus Grant of the Flemish FWO. MO acknowledges financial support from European Union under project QALGO (Grant Agreement No.~600700) and by a Leverhulme Trust Early Career Fellowhip (ECF-2015-256).


\bibliographystyle{alphaurl}
\bibliography{References}

\appendix

\section{Entropy power inequality for qubits} \label{apx:EPI for qubits}

For the case of qubits ($d=2$), there is a simple proof of \cref{EPI H} which exploits the Bloch-vector representation of a qubit state.

\subsection{Qubit states and the Bloch sphere} \label{apx:Bloch sphere}

It is known that the state $\rho$ of a qubit can be expressed in terms of its Bloch vector $\vec{r}$ as follows:
\begin{equation}
  \rho = \frac{1}{2} (I + \vec{r} \cdot \vec{\sigma})
       = \frac{1}{2} (I + x \sigma_x + y \sigma_y + z \sigma_z),
  \label{eq:Qubit rho}
\end{equation}
where $\vec{r} := (x,y,z) \in \R^3$ such that $\abs{\vec{r}} := \sqrt{x^2 + y^2 + z^2} \leq 1$. Here $\vec{r} \cdot \vec{\sigma}$ denotes a formal inner product between $\vec{r}$ and $\vec{\sigma} := (\sigma_x, \sigma_y, \sigma_z)$, with $\sigma_x$, $\sigma_y$ and $\sigma_z$ being the Pauli matrices. Moreover, the eigenvalues of the state $\rho$ can easily be seen to be given by $\frac{1}{2} (1 \pm \abs{\vec{r}})$. Hence, its von~Neumann entropy is simply
\begin{equation}
  H(\rho) = h\of[\big]{ \tfrac{1}{2}(1 + \abs{\vec{r}}) },
  \label{eq:Qubit H}
\end{equation}
where $h(p) := - p \log p - (1-p) \log (1-p)$ is the \emph{binary entropy} of $p \in [0,1]$ in nats. For $x \in [-1,1]$, let us define the function
\begin{equation}
  \f(x) := h\of[\big]{ \tfrac{1}{2}(1 + x) }.
  \label{eq:f}
\end{equation}
One can easily see that $\f$ is symmetric around the vertical axis and verify that
\begin{equation}
  \f''(x) = -\frac{1}{1-x^2} \leq 0,
  \label{eq:f''}
\end{equation}
so $\f$ is concave (see \cref{fig:Plot of f}). In terms of this function, \cref{eq:Qubit H} is given by
\begin{equation}
  H(\rho) = \f(\abs{\vec{r}}).
  \label{eq:H and f}
\end{equation}

\begin{figure}
\centering

\def\zx{4.0}
\def\zy{4.0}

\begin{tikzpicture}
  [domain = -0.9999:0.9999, variable = \x, samples = 70,
   line width = 0.7pt,
   arc/.style = {-latex}]

  \def\d{0.1}
  \def\D{0.6}

  \draw[arc] (-\zx-\D,0) -- (\zx+\D,0) node[right] {$x$};
  \draw[arc] (0,-\d) -- (0,\zy+1.5*\D) node[above] {$\f$};

  \draw plot (\zx*\x, { -\zy * ifthenelse(\x, ((1+\x)/2)*log2((1+\x)/2) + ((1-\x)/2)*log2((1-\x)/2), 1) });

  \def\l{0.35}

  \draw (-\zx,-\l) node {$-1$};
  \draw (0   ,-\l) node {$0$};
  \draw ( \zx,-\l) node {$1$};
  \draw (-\l,  0)+(+0.1,0.25) node {$0$};
  \draw (-\l,\zy)+(-0.2,0.25) node {$\log 2$};

\end{tikzpicture}
\caption{A plot of the function $\f$ defined in \cref{eq:f}.\label{fig:Plot of f}}
\end{figure}
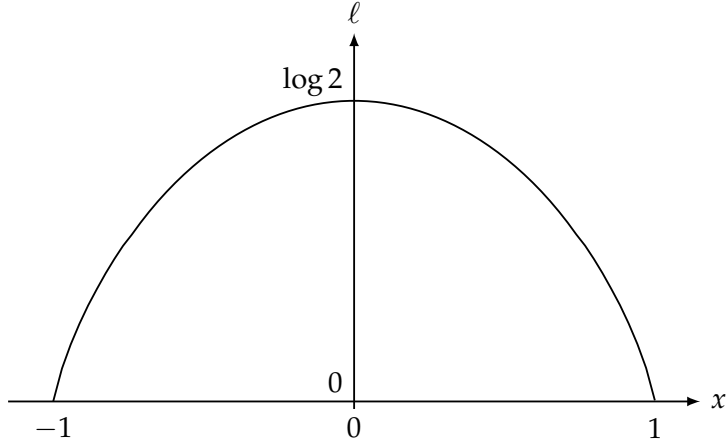

\subsection{Proof of the qubit EPI} \label{apx:Proof of qubit EPI}

For a pair of qubit states $\rho_1$ and $\rho_2$, the first EPI of \cref{thm:EPIs} is given by
\begin{equation}
  H\of[\big]{\rho_1 \boxplus_\a \rho_2}
  \geq \a H(\rho_1) + (1-\a) H(\rho_2), \qquad \forall \, \a \in [0,1].
\label{eq:qubits}
\end{equation}
Below is a simple proof of the above inequality for the special case of qubits.

\begin{proof}
Using \cref{eq:H and f}, the inequality \eqref{eq:qubits} can be expressed in terms of the function $\f$ as follows:
\begin{equation}
  \f(r) \geq \a \f(r_1) + (1-\a) \f(r_2),
  \label{eq:EPI for f}
\end{equation}
where $r := \abs{\vec{r}}$, $r_1 := \abs{\vec{r}_1}$, $r_2 := \abs{\vec{r}_2}$, and $\vec{r}$, $\vec{r}_1$, $\vec{r}_2$ denote the Bloch vectors of the states $\rho_1 \boxplus_\a \rho_2$, $\rho_1$ and $\rho_2$, respectively. Recall from \cref{eq:r} that $\vec{r}$ can be expressed in terms of $\vec{r}_1$ and $\vec{r}_2$ as follows:
\begin{equation}
  \vec{r}
  = \a \vec{r}_1
  + (1-\a) \vec{r}_2
  + \sqrt{\a(1-\a)} (\vec{r}_1 \times \vec{r}_2).
\end{equation}
Since $\vec{r}_1 $ and $\vec{r}_2$ are both perpendicular to $\vec{r}_1 \times \vec{r}_2$, we get
\begin{equation}
  \abs{\vec{r}}^2
  = \vec{r} \cdot \vec{r}
  = \a^2 \abs{\vec{r}_1}^2
  + (1-\a)^2 \abs{\vec{r}_2}^2
  + 2 \a(1-\a) \vec{r}_1 \cdot \vec{r}_2
  + \a(1-\a) \abs{\vec{r}_1 \times \vec{r}_2}^2.
  \label{eq:rr}
\end{equation}
If we denote by $\gamma \in [0,\pi]$ the angle between vectors $\vec{r}_1$ and $\vec{r}_2$, then $\vec{r}_1 \cdot \vec{r}_2 = \abs{\vec{r}_1} \abs{\vec{r}_2} \cos \gamma$ and $\abs{\vec{r}_1 \times \vec{r}_2} = \abs{\vec{r}_1} \abs{\vec{r}_2} \sin \gamma$, so \cref{eq:rr} becomes
\begin{equation}
  r^2
  = \a^2 r_1^2
  + (1-\a)^2 r_2^2
  + \a(1-\a) \of{2 r_1 r_2 \cos \gamma + r_1^2 r_2^2 \sin^2 \gamma}.
  \label{eq:r2}
\end{equation}

Note that the right-hand side of the inequality \eqref{eq:EPI for f} does not depend on the angle $\gamma$ between the vectors $\vec{r}_1$ and $\vec{r}_2$, so it suffices to prove \cref{eq:EPI for f} only for those values of $\gamma$ that \emph{minimize} the left-hand side. Since $f(r)$ is a decreasing function of $r$ for $r \geq 0$ (see \cref{fig:Plot of f}), we have to consider only those values of $\gamma$ that \emph{maximize} $r$. From \cref{eq:r2} we have that
\begin{equation}
  r = \sqrt{
        \a^2 r_1^2
        + (1-\a)^2 r_2^2
        + \a(1-\a) r_1 r_2 \of{2 \cos \gamma + r_1 r_2 \sin^2 \gamma}
      }
\end{equation}
where $\a, r_1, r_2 \in [0,1]$. To maximize this over $\gamma$, we only need to maximize the last term. Note that
\begin{equation}
       2 \cos \gamma + r_1 r_2 \sin^2 \gamma
  \leq 2 \cos \gamma + \sin^2 \gamma
  \leq 2,
\end{equation}
where the last inequality is tight if and only if $\gamma = 0$. This gives a simple upper bound on $r$:
\begin{equation}
  r \leq \sqrt{ \a^2 r_1^2 + (1-\a)^2 r_2^2 + 2 \a(1-\a) r_1 r_2}
  = \a r_1 + (1-\a) r_2.
\end{equation}

Since $\f(r)$ is monotonically decreasing for $r \geq 0$, we get
\begin{equation}
  \f(r) \geq \f \of[\big]{ \a r_1 + (1-\a) r_2 }.
  \label{eq:Tight}
\end{equation}
Note that this lower bound is independent of the parameter $\gamma$ and is tight (it becomes equality when $\gamma = 0$). Recall from \cref{eq:f''} that $f$ is concave (see also \cref{fig:Plot of f}), so
\begin{equation}
  \f \of[\big]{ \a r_1 + (1-\a) r_2 }
  \geq \a \f(r_1) + (1-\a) \f(r_2).
\end{equation}
By combining the last two inequalities, we get the desired result.
\end{proof}

\end{document}